\newenvironment{proof}[1][Proof]
{\par\noindent{\bf #1:} }{\hspace*{\fill}\nolinebreak{$\Box$}\bigskip\par}
\newcommand{\qed}{\hspace*{\fill}\nolinebreak\ensuremath{\Box}}
\newtheorem{theorem}{Theorem}
\newtheorem{lemma}{Lemma}
\newtheorem{corollary}{Corollary}
\newtheorem{definition}{Definition}
\newcommand{\Alg}{{\tt CST}}
\newcommand{\FindMinimal}{{\tt MCPS}}
\newcommand{\w}{\widetilde}
\newcommand{\jobs}{\mathcal{J}}
\newcommand{\ms}{\textup{ms}}
\newcommand{\nat}{\mathbb{N}}
\newcommand{\tds}{\textup{TDS}}
\newcommand{\csp}{\textup{CS}}
\newcommand{\csfp}{\textup{CS}_{\textup{F}}}
\newcommand{\strategy}{\mathcal{S}}
\newcommand{\collection}{\mathcal{C}}
\newcommand{\border}{\delta}
\newcommand{\clearE}{C_{\textup{E}}}
\newcommand{\sn}{\textup{\texttt{s}}}
\newcommand{\msn}{\texttt{ms}}
\newcommand{\csn}{\textup{\texttt{cs}}}
\newcommand{\mcsn}{\textup{\texttt{mcs}}}
\begin{document}

\title{Connected searching of weighted trees}

\author{Dariusz Dereniowski\thanks{Partially supported by the Foundation for Polish Science (FNP) and by the Polish Ministry of Science and Higher Education (MNiSW) grant N~N206~379337.}\\
       Department of Algorithms and System Modeling,\\
       Gda\'{n}sk University of Technology, Poland\\
       \small{deren@eti.pg.gda.pl}}

\maketitle

\begin{center}
\parbox[c]{10 cm}{
\textbf{Abstract:} In this paper we consider the problem of connected edge searching of weighted trees. It is shown that there exists a polynomial-time algorithm for finding optimal connected search strategy for bounded degree trees with arbitrary weights on the edges and vertices of the tree. The problem is NP-complete for general node-weighted trees (the weight of each edge is $1$).
}
\end{center}

\vspace{5 pt}

\textbf{Keywords:} connected searching, graph searching, search strategy

\section{Introduction}
\label{sec:intro}

Given a simple undirected graph $G$, a fugitive is located on an edge of $G$. The task is to design a sequence of moves of a team of searchers that results in capturing the fugitive. The fugitive is invisible for the searchers -- they can deduce the location of the fugitive only from the history of their moves; the fugitive is fast, i.e. whenever he moves, he can traverse a path of arbitrary length in the graph, as long as the path is free of searchers. Finally, the fugitive has a complete knowledge about the graph and about the strategy of the searchers, which means that he will avoid the capture as long as it is possible. The allowable moves for the searchers are, in general, placing a searcher on a vertex, removing a searcher from a vertex and sliding a searcher along an edge of $G$. An edge is \emph{clear} if it cannot contain the fugitive. Capturing the fugitive is then equivalent to clearing all the edges of $G$. The minimum number of searchers sufficient to clear the graph is the \emph{search number} of $G$, denoted by $\sn(G)$. The edge searching problem has been introduced by Parsons in~\cite{Parsons76}. The corresponding node searching problem was first studied by Kirousis and Papadimitriou in~\cite{searching_and_pebbling}. For surveys on graph searching problems see~\cite{guaranteed_graph_searching} or~\cite{searchng_and_sweeping}.

A key property of a search strategy is the monotonicity. A search is \emph{monotone} if the strategy ensures that the fugitive cannot reach an edge that has been already cleared. For most graph searching models it has been proven that there exists an optimal search strategy that is monotone. The minimum number of searchers needed to construct a monotone search strategy for $G$ is denoted by $\msn(G)$.
We have that recontamination does help for connected~\cite{sweeping_large_cliques} and connected visible search \cite{monotony_properties_connected_visible}. Moreover, the difference between $\csn(G)$ and $\mcsn(G)$ can be arbitrarily large for some graphs $G$~\cite{sweeping_large_cliques}. However, if $T$ is a tree then $\csn(T)=\mcsn(T)$~\cite{connected_weighted_trees}.

We say that a search is \emph{internal} if removing the searchers from the graph is not allowed, while for the search to be \emph{connected} we require that after each move of the searchers, the subgraph of $G$ that is clear is connected. The minimum number of searchers required for each connected search strategy of $G$ is called the \emph{connected search number} of $G$, denoted by $\csn(G)$. The corresponding monotone connected search number is denoted by $\mcsn(G)$.

Clearly $\csn(G)\geq\sn(G)$ for each graph $G$, since each connected search strategy is also a search strategy. Some upper bounds for the connected search number are known, in particular $\csn(T)\leq 2\sn(T)-2$, where $T$ is a tree~\cite{searching_not_jumping}. Connected search number is at most $O(k\log n)$, where $k$ equals the branchwidth of $G$ \cite{price_of_connectedness}. The latter implies that $\csn(G)\leq c\log n\cdot\sn(G)$, where $c$ is a fixed number, which is also a consequence of the results in \cite{connected_treewidth}. For the search with visible fugitive both search numbers are equal~\cite{connected_treewidth}.

Several algorithmic results for connected searching of special classes of graphs are know, including chordal graphs~\cite{connected_searching_chordal_graphs}, hypercubes~\cite{hypercubes,contiguous_hypercube}, a pyramid~\cite{pyramid}, chordal rings and tori~\cite{chordal_rings_tori}, or outerplanar graphs~\cite{connected_outerplanar}.
For results on searching planar graphs with small number of searchers and small number of connected components of the cleared subgraph see~\cite{Nowakowski_flooding}.

Authors in~\cite{connected_weighted_trees} provided an efficient algorithm for searching weighted trees. However, their algorithm does not always produce an optimal solution (the tree in Figure~\ref{pic:extensions} in Section~\ref{sec:searching_trees} may serve as an example), which results in an approximation algorithm. The complexity status of searching weighted trees turns out to be NP-complete, which we prove in this work.

This paper is organized as follows. In the next section we give the necessary definitions. In Section~\ref{sec:searching_trees} we analyze the basic properties of connected searching of weighted trees.\footnote{A different model of edge searching of weighted graphs, than the one considered here and in~\cite{connected_weighted_trees}, has been recently introduced in~\cite{edge_searching_weighted_graphs}.} Then, in Section~\ref{sec:bounded_degrees}, we give an algorithm for computing optimal search strategies for weighted trees. The algorithm is exponential in the maximum degree of a tree. Thus, it is designed for trees of bounded degree. Section~\ref{sec:hard} deals with the complexity of searching trees. In Subsection~\ref{subsec:reduction} we prove that finding an optimal connected search of a weighted tree is strongly NP-hard, i.e. it is NP-hard for trees with integer weight functions with polynomially (in the size of the tree) bounded values on the vertices and edges. This justifies the exponential, in general, running time of the algorithm. In order to present the proof we need a preliminary result that a special instance of scheduling time-dependent tasks is NP-complete, which is proven in Subsection~\ref{subsec:time-dependent}.

\section{Preliminaries}
\label{sec:preliminaries}

In the following we assume that all the graphs $G=(V(G),E(G),w)$ are connected, i.e. there exists a path between each pair of vertices of $G$. The sets $V(G)$ and $E(G)$ are, respectively, the vertices and the edges of $G$, while $w\colon V(G)\cup E(G)\to\nat_+$ is a weight function. ($\nat_+$ is the set of positive integers.) We start with a formal definition of the Connected Searching problem ($\csp$).

\begin{definition}
Let $k\geq 0$ be an integer. Initially all the edges of a weighted graph $G=(V(G),E(G),w)$ are \emph{contaminated}. A \emph{connected $k$-search strategy} $\strategy$ starts by placing $k$ searchers on an arbitrary \emph{starting vertex} $v_0$ of $G$. Each move of $\strategy$ consists of sliding $j\geq 1$ searchers along an edge $e\in E(G)$. If $e$ is contaminated, then we require $j\geq w(e)$, and $e$ becomes \emph{clear}. An edge $uv\in E(G)$ becomes \emph{contaminated} if there exists a contaminated edge $vy$ and less than $w(v)$ searchers occupy $v$. The subgraph that is clear has to be connected after each step of $\strategy$. After the last move of $\strategy$ all the edges of $G$ are clear.
\end{definition}

Given any strategy $\strategy$, $\sn(\strategy)$ is the number of searchers used by $\strategy$, $|\strategy|$ is the number of moves in $\strategy$ and $\strategy[i]$ is its $i$th move, $1\leq i\leq|\strategy|$. For each $i=1,\ldots,|\strategy|$, $\border(\strategy[i])$ is the set of vertices $v$, occupied by searchers at the end of move $i$, such that there exists a contaminated edge incident to $v$. We say that the vertices in $\border(\strategy[i])$ are \emph{guarded} in step $i$. In other words, if at the end of move $\strategy[i]$ there exists a vertex $v\in\border(\strategy[i])$ and less than $w(v)$ searchers occupy $v$, then a recontamination occurs.

The smallest number $k$ for which a connected $k$-search strategy $\strategy$ exists is called the \emph{connected search number} of $G$, denoted by $\csn(G)$. The minimum number of $k$ searchers such that there exists a monotone connected $k$-search for $G$ is called the \emph{monotone connected search number} of $G$, and is denoted by $\mcsn(G)$. If a (monotone) connected search strategy $\strategy$ uses ($\mcsn(G)$) $\csn(G)$ searchers, i.e. (respectively $\sn(\strategy)=\mcsn(G)$) $\sn(\strategy)=\csn(G)$, then $\strategy$ is called an \emph{optimal} (monotone) connected search strategy for $G$.

Forcing a connected search strategy to have different starting vertices results in different number of searchers required to clear a graph $G$. The problem where the starting vertex is a part of the input is denoted by $\csfp$ (Connected Searching problem with Fixed starting vertex).

The number of searchers used for guarding at the end of step $\strategy[i]$ is denoted by $|\strategy[i]|$. Note that
\[|\strategy[i]|=\sum_{v\in\border(\strategy[i])}w(v).\]
The searchers which are not used for guarding in a given step $\strategy[i]$, called \emph{free} searchers in step $i$. In particular, if more than $w(v)$ searchers occupy $v\in\border(\strategy[i])$, then $w(v)$ of them are guarding $v$, while the remaining ones are considered to be free. Free searchers can move arbitrarily along the clear edges until the next move $\strategy[i']$, $i'>i$, which clears an edge $uv$, where $u\in\border(\strategy[i])$. The move $\strategy[i']$ can be performed only if the required number of $j$ searchers (with $j'$ free searchers among them), which will slide along $uv$ in $\strategy[i']$, is at $u$. So, each move among $\strategy[i+1],\ldots,\strategy[i'-1]$ which is not necessary for gathering the $j$ searchers for clearing $uv$ in $\strategy[i']$ can be performed after $\strategy[i']$. Moreover, each set of $j'$ searchers, which are free at the end of move $\strategy[i]$, can be used to clear $uv$ in $\strategy[i']$. For this reason, we do not list the moves of sliding searchers along clear edges. Thus, due to this simplifying assumption, $|\strategy|=|E(G)|$.

We say that a strategy is \emph{partial} if it clears a subset of edges of $G$. Given a search strategy $\strategy$ for $G$, the symbol $\strategy[\preceq i]$ is used to denote the partial search strategy containing the moves $\strategy[1],\ldots,\strategy[i]$. Clearly, if $\strategy$ is connected, then $\strategy[\preceq i]$ is also connected. Given a partial search strategy $\strategy'$, we extend our notation so that $\border(\strategy')$ is the set of guarded vertices after the last move of $\strategy'$, $\border(\strategy')=\border(\strategy'[|\strategy'|])$. The symbol $\clearE(\strategy')$ denotes the set of edges cleared by a partial strategy $\strategy'$. In particular, if $\strategy$ clears $G$, then $\border(\strategy)=\emptyset$ and $\clearE(\strategy)=E(G)$.

\section{Searching trees -- basic properties}
\label{sec:searching_trees}

We are able to make several simplifying assumptions on connected search strategies once we consider the $\csp$ problem for weighted trees $T=(V(G),E(G),w)$.

In Sections~\ref{sec:bounded_degrees} and~\ref{sec:hard} we provide the algorithm for $\csfp$ problem on bounded degree trees and a polynomial-time reduction from a NP-complete problem to the $\csfp$ problem for general trees. In both cases we conclude that the corresponding result (an efficient algorithm or a polynomial-time reduction) holds for the $\csp$ problem on trees as well. We will use the symbol $\csn(T_r)$ to denote the minimum number of searchers needed to clear $T$ when $r$ is the starting vertex. Then,
\begin{equation} \label{eq:fixed_start}
\csn(T)=\min\{\csn(T_v)\colon  v\in V(T)\}.
\end{equation}
To simplify the notation, all trees $T$ are rooted at $r\in V(T)$. In the remaining part of this paper we consider the $\csfp$ problem with the starting vertex $r$.

Given a tree $T=(V(T),E(T),w)$ rooted at $r\in V(T)$, $E_v$ is the set of edges between $v$ and its descendants, $v\in V(T)$, and $T_v$ is the subtree of $T$ rooted at $v$.

For each tree $T$ it holds $\mcsn(T)=\csn(T)$~\cite{connected_weighted_trees}. Thus, in what follows each connected search strategy is monotone. As mentioned in Section~\ref{sec:preliminaries}, we only list the clearing moves of a search strategy $\strategy$, which implies $|\strategy|=|E(T)|$.

Consider a connected search strategy $\strategy$ for $T$. Let $\strategy[i]$ be a move of clearing an edge $uv$. If $v$ is a leaf, then the number of searchers that need to slide along $uv$ to clear it in step $\strategy[i]$ is $w(uv)$. When $uv$ gets clear at the end of move $\strategy[i]$, there is no need to guard $v$, which means that the searchers that reach $v$ in $\strategy[i]$ are free at the end of the move $\strategy[i]$. This holds regardless of the weight of $v$, $w(v)$. Similarly, if $u$ is a leaf, then $u=r$ and $i=1$ and it is easy to see that $w(uv)$ searchers suffice to clear $uv$, and $r$ does not have to be guarded at the end of move $\strategy[1]$. So, we may w.l.o.g. assume that
\begin{equation} \label{eq:leaf}
w(v)=1\textup{ for each leaf }v\in V(T).
\end{equation}
The number of searchers that slide along $uv$, $v$ is a son of $u$, is $\max\{w(uv),w(v)\}$ for all edges $uv$. This follows from~(\ref{eq:leaf}) when $u$ or $v$ is a leaf, while in the remaining cases $w(uv)$ searchers are needed to clear $uv$, and there have to be at least $w(v)$ searchers at $v$ at the end of $\strategy[i]$ to avoid recontamination. Thus, if the search is required to be connected and $w(v)>w(uv)$ then $w(v)-w(uv)$ searchers which are not necessary for clearing $uv$ follow along $uv$ together with the $w(uv)$ searchers that clear the edge.

Our next simplifying assumption is considering only node-weighted trees, and we argue that it does not lead to losing generality. Given a connected search strategy $\strategy$ for $T$ with starting vertex $r$, consider a move $\strategy[i]$ of clearing an edge $uv$, where $v$ is a son of $u$. At the beginning of $\strategy[i]$ the vertex $v$ is unoccupied and $u$ is guarded by $w(u)$ searchers. To clear $uv$ we need to slide $\max\{w(uv),w(v)\}$ searchers along $uv$. If $w(uv)<w(v)$, then by~(\ref{eq:leaf}) $v$ is not a leaf of $T$, which means that at the end of move $\strategy[i]$ at least $w(v)$ searchers have to occupy $v$. This means that we have to slide $w(v)$ searchers along $uv$ regardless of $w(uv)$. Thus, we may assume that if $w(uv)\leq w(v)$, then $w(uv)=w(v)$. As a result, for each edge $uv$, where $u$ is the father of $v$ we have
\begin{equation} \label{eq:edge-weights}
w(uv)\geq w(v).
\end{equation}

Consider now a new tree $T'=(V(T'),E(T'),w')$ obtained from $T$ by replacing each edge $uv$ by two edges $ux_{uv}$ and $vx_{uv}$, where $x_{uv}$ is a new vertex of $T'$ corresponding to the edge $uv$ of $T$. Let $w'(ux_{uv})=w'(vx_{uv})=1$ and $w(x_{uv})=w(uv)$ for each $uv\in E(T)$ and let $w'(v)=w(v)$ for each $v\in V(T)$. Clearly, $|E(T')|=2|E(T)|$.

For an example of all the transformations given above see Figure~\ref{pic:simplify}.
\begin{figure}[htb]
\begin{center}
\includegraphics[scale=1]{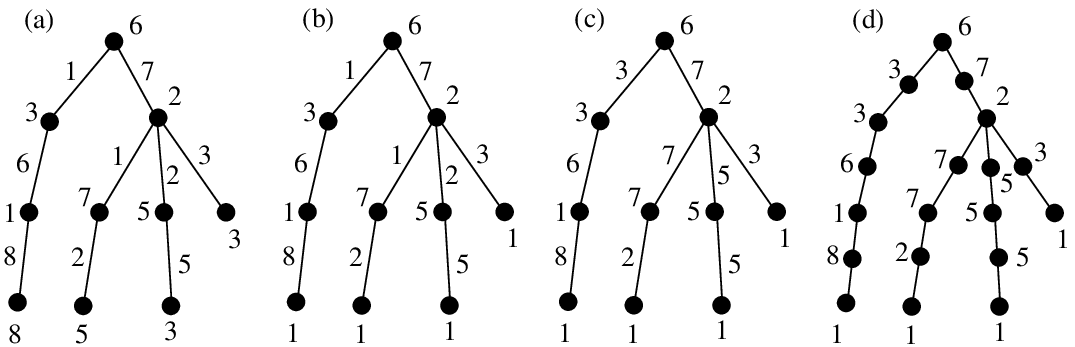}
\caption{(a) a rooted tree with node and edge weights;
         (b) the weight of each leaf is $1$;
         (c) the corresponding tree satisfying~(\ref{eq:edge-weights});
         (d) the node-weighted tree $T'$ obtained from $T$}
\label{pic:simplify}
\end{center}
\end{figure}

\begin{lemma} \label{lem:edges_unweighted}
For each $T$ and its corresponding tree $T'$, $\csn(T_r')=\csn(T_r)$ for each $r\in V(T)$.
\end{lemma}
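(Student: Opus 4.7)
The plan is to prove $\csn(T_r') = \csn(T_r)$ by exhibiting explicit simulations in both directions, using throughout the observation established just before the lemma that on $T$ an optimal monotone strategy slides $w(uv)$ searchers from $u$ to $v$ when clearing the edge $uv$ (with $u$ the parent of $v$).

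For $\csn(T_r') \leq \csn(T_r)$ I would take an optimal monotone connected strategy $\strategy$ for $T_r$ using $k = \csn(T_r)$ searchers and, for each move ``slide $w(uv)$ from $u$ to $v$'', output the two moves ``slide $w(uv)$ from $u$ to $x_{uv}$'' followed by ``slide $w(uv)$ from $x_{uv}$ to $v$'' into $\strategy'$. The key checks are that each individual move slides enough searchers to clear the corresponding $T'$-edge and to guard its far endpoint, that in between the two moves $x_{uv}$ is guarded by its $w(uv) = w'(x_{uv})$ searchers, and that the total budget never exceeds $k$: the intermediate configuration (with $w(uv)$ guards at $x_{uv}$ and, if $u$ still has contaminated child edges, $w(u)$ guards at $u$) is matched by the same requirement in $\strategy$ just before sliding across $uv$.

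For $\csn(T_r) \leq \csn(T_r')$, given a monotone connected strategy $\strategy'$ for $T_r'$ with $k' = \csn(T_r')$ searchers, let $p_{uv}$ and $q_{uv}$ denote the times at which $\strategy'$ clears $ux_{uv}$ and $vx_{uv}$, respectively. I would define $\strategy$ to clear the $T$-edges in the order of the values $p_{uv}$, each time sliding $w(uv)$ searchers from $u$ to $v$ in a single move. Connectedness of $\strategy$ holds because any path from $r$ to $u$ in $\strategy'$'s cleared $T'$-subgraph just before $p_{uv}$ must go through $x$-vertices both of whose incident edges are cleared, so it projects to a path of $T$-edges already cleared in $\strategy$.

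For the resource count I would compare $\strategy$'s state just before its move for $uv$ to $\strategy'$'s state just before $p_{uv}$. Since $p_e \geq p_{uv}$ implies $q_e > p_{uv}$, any $T$-edge not yet cleared in $\strategy$ is also not yet fully cleared in $\strategy'$; hence every $T$-vertex on $\strategy$'s border can be charged injectively to a border vertex of $\strategy'$ of at least the same weight---either itself, if $\strategy'$ has already reached it, or else the $x$-vertex $x_{pv'}$ on the edge to its parent $p$, whose weight $w'(x_{pv'}) = w(pv') \geq w(v')$ by~(\ref{eq:edge-weights}). The same implication at $u$ bounds $\strategy$'s slide-plus-guard requirement at $u$ by that of $\strategy'$ at $p_{uv}$, so $\strategy$'s peak at each move is at most $\strategy'$'s peak at $p_{uv}$, hence at most $k'$. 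The main obstacle---and the reason for ordering by $p_{uv}$ rather than by $q_{uv}$---is that the two moves of a pair in $\strategy'$ need not be consecutive, so one cannot hope for a naive pairing of intermediate configurations; the $p$-ordering is precisely what makes the above charging go through.
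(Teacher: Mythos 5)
Your proposal is correct, and the two directions deserve separate comment. For $\csn(T_r')\leq\csn(T_r)$ you do exactly what the paper does: split each clearing move of $uv$ into the two moves over $ux_{uv}$ and $vx_{uv}$ and observe that the intermediate configuration (with $w(uv)=w'(x_{uv})$ guards at $x_{uv}$) costs no more than the original move. For the reverse direction your route genuinely differs. The paper first normalizes $\strategy'$ so that the move clearing $vx_{uv}$ immediately follows the move clearing $ux_{uv}$ (justified in one line by $w(v)\leq w'(x_{uv})$ via~(\ref{eq:edge-weights})), after which the translation back to $T$ is a trivial pairing of consecutive moves. You instead keep $\strategy'$ untouched, order the $T$-edges by the clearing times $p_{uv}$ of their upper halves, and prove the searcher bound by an injective weight-nonincreasing charging of $\strategy$'s border into $\strategy'$'s border (charging $v'$ to itself when $v'x_{pv'}$ is already clear, and to $x_{pv'}$ otherwise, again using $w'(x_{pv'})=w(pv')\geq w(v')$). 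Your version is longer but effectively supplies the proof of the exchange step that the paper only asserts as a w.l.o.g.; the paper's version is shorter at the price of leaving that normalization to the reader. Both are sound, and your observation that the connectivity of $\strategy$ follows because a cleared $x$-vertex on the $r$--$u$ path must have both incident $T'$-edges clear is the right justification for the $p$-ordering.
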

\begin{proof}
Given a connected search strategy $\strategy$ for $T$, we create a connected search strategy $\strategy'$ for $T'$ as follows. Each move $\strategy[i]$, $1\leq i\leq|\strategy|$, clearing an edge $uv$, where $u$ is the father of $v$, is replaced by two moves $\strategy'[2i-1]$ and $\strategy[2i]$ of clearing the edges $ux_{uv}$ and $vx_{uv}$, respectively. A simple induction on the number of moves in $\strategy$ allows us to prove that $\sn(\strategy')=\sn(\strategy)$. Indeed, by~(\ref{eq:edge-weights}), clearing $uv$ in $\strategy$ requires $w(uv)$ searchers excluding the searchers used for guarding, and by the definition of $T'$, $w(uv)$ searchers are sufficient to clear $ux_{uv}$ and $vx_{uv}$ resulting in the same set of guarded vertices in $\strategy$ and $\strategy'$ after moves $\strategy[i]$ and $\strategy'[2i]$, respectively. This proves that $\csn(T_r')\leq\csn(T_r)$.

Let $\strategy'$ be a connected search strategy for $T'$. We may w.l.o.g. assume that if $\strategy'[i]$ clears an edge $ux_{uv}$, where $x_{uv}$ is a son of $u$ then, a move of clearing $vx_{uv}$ follows, because $w(v)\leq w(vx_{uv})$ by~(\ref{eq:edge-weights}). Two consecutive moves of clearing $ux_{uv}$ and $vx_{uv}$ in $\strategy'$ can be translated into clearing $uv$ in a connected search strategy which requires $w(uv)=w'(x_{uv})$ searchers. Thus, $\sn(\strategy)=\sn(\strategy')$, and consequently $\csn(T_r)\leq\csn(T_r')$. This proves that $\csn(T_r)=\csn(T_r')$.
\end{proof}

In the remaining part of this paper we assume that the weight of each edge $e\in E(T)$ is $1$.

\begin{definition}
Let $\strategy$ and $\strategy'$ be partial search strategies for $T$, where $\clearE(\strategy)\cap\clearE(\strategy')=\emptyset$. We define a search strategy $\strategy\oplus\strategy'$ as follows:
\begin{list}{}{}
\item[1.] $(\strategy\oplus\strategy')[i]=\strategy[i]$ for each $i=1,\ldots,|\strategy|$,
\item[2.] $(\strategy\oplus\strategy')[|\strategy|+i]$, $i=1,\ldots,|\strategy'|$, clears the edge cleared in the move $\strategy'[i]$, while the set of guarded vertices at the end of the move $(\strategy\oplus\strategy')[|\strategy|+i]$ is $\border((\strategy\oplus\strategy')[|\strategy|+i])=\border(\strategy'[i])\cup(\border(\strategy)\setminus X)$, where $X$ is the set of vertices initially occupied by $\strategy'$.
\end{list}
\end{definition}
In other words, $\strategy\oplus\strategy'$ clears all the edges cleared by $\strategy$ and $\strategy'$ in the order corresponding to the moves $\strategy[1],\ldots,\strategy[|\strategy|],\strategy'[1],\ldots,\strategy'[|\strategy'|]$.
Note that in particular we have that $\clearE((\strategy\oplus\strategy')[\preceq i])=\clearE(\strategy[\preceq i])$ for each $i=1,\ldots,|\strategy|$, and $\clearE((\strategy\oplus\strategy')[\preceq (|\strategy|+i)])=\clearE(\strategy)\cup\clearE(\strategy'[\preceq i])$ for each $i=1,\ldots,|\strategy'|$.
Furthermore, for $\strategy\oplus\strategy'$ to be a partial connected search starting at $r$, $\strategy$ has to be a partial connected search with starting vertex $r$, however, $\strategy'$ does not have to be connected, but the requirement is that after each step of $\strategy'$, each subgraph cleared by $\strategy'$ has to have a common vertex with $\border(\strategy)$.

\begin{definition}
Given a tree $T$ rooted at $r$, a vertex $v\in V(T)$, and an integer $k\geq 0$. We say that a partial connected $k$-search $\strategy_v$ for $T_v$, $v\in V(T)$, is $(k,v)$-\emph{minimal} if $w(\border(\strategy_v))\leq w(v)$ and $w(\border(\strategy_v))\leq w(\border(\strategy_v'))$ for each partial connected $k$-search $\strategy_v'$ for $T_v$.
\end{definition}
A strategy $\strategy_v$ is \emph{not minimal} if there exists no $k$ such that $\strategy$ is $(k,v)$-minimal.
We say that a partial connected search strategy $\strategy$ for $T_r$ can be \emph{extended} to a connected $(k,r)$-minimal search for $T_r$ if there exists a search strategy $\strategy'$ such that $\strategy\oplus\strategy'$ is a connected $(k,r)$-minimal search for $T_r$. This in particular implies that $\sn(\strategy)\leq k$. Given a $T_r$ and $E'\subseteq E(T_r)$, $T_r-E'$ is a set of maximal rooted subtrees induced by the edges in $E(T_r)\setminus E'$.

\begin{lemma} \label{lem:minimal_S_v}
A partial (not minimal) connected search strategy $\strategy$ for $T_r$ can be extended to a $(k,r)$-minimal search for $T_r$ if and only if there exist $T_v'$ (rooted at $v$) in $T-\clearE(\strategy)$ and a partial $(k-w(\border(\strategy)\setminus\{v\}),v)$-minimal connected search $\strategy_v$ for $T_v'$, such that $\strategy\oplus\strategy_v$ can be extended to a partial $(k,r)$-minimal search for $T_r$.
\end{lemma}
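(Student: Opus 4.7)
The plan is to prove each direction separately. The $(\Leftarrow)$ direction is immediate: any extension $\strategy'''$ of $\strategy\oplus\strategy_v$ to a $(k,r)$-minimal search yields $\strategy_v\oplus\strategy'''$, which extends $\strategy$ to the same $(k,r)$-minimal search. So all the work is in the $(\Rightarrow)$ direction.

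For $(\Rightarrow)$, I would start from an assumed $(k,r)$-minimal extension $\strategy^{\ast}=\strategy\oplus\strategy''$. Because $\strategy$ is not minimal, $\strategy''$ is nonempty; moreover, since the clear region after $\strategy$ is a connected subtree whose boundary is $\border(\strategy)$, the first move of $\strategy''$ clears an edge incident to some $v\in\border(\strategy)$. I would take $T_v'$ to be the subtree of $T-\clearE(\strategy)$ rooted at that $v$. Next, invoking the move-reordering argument described at the end of Section~\ref{sec:preliminaries} (moves that do not gather searchers for a future clearing can be postponed), I would rearrange $\strategy''$ so that all moves clearing edges of $T_v'$ appear first, writing $\strategy''=\strategy_v\oplus\strategy'''$ where $\strategy_v$ consists exactly of the $T_v'$-moves.

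It then remains to verify that $\strategy_v$ is $(k',v)$-minimal for $k'=k-w(\border(\strategy)\setminus\{v\})$. Throughout $\strategy_v$ the vertices of $\border(\strategy)\setminus\{v\}$ must remain guarded, consuming $w(\border(\strategy)\setminus\{v\})$ searchers, so at most $k'$ searchers are free for the moves on $T_v'$ and $\sn(\strategy_v)\leq k'$ as a connected search of $T_v'$ started at $v$. Since $\strategy_v$ clears all of $T_v'$, we have $\border(\strategy_v)=\emptyset$ and hence $w(\border(\strategy_v))=0\leq w(v)$, which trivially realizes the minimum possible border weight over all partial connected $k'$-searches of $T_v'$; thus $\strategy_v$ is $(k',v)$-minimal. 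Combined with the reordering, $\strategy\oplus\strategy_v\oplus\strategy'''=\strategy^{\ast}$, so $\strategy\oplus\strategy_v$ extends to the desired $(k,r)$-minimal search via $\strategy'''$.

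The main obstacle will be formalizing the reordering step: one must check that pushing all $T_v'$-moves to the front of $\strategy''$ produces a valid connected monotone search within the same searcher budget $k$. The key observations are that moves on different components of $T-\clearE(\strategy)$ act on disjoint edge sets, the guarding allocations at the boundary vertices of other components are untouched while $\strategy_v$ runs, and because $v\in\border(\strategy)$ the portion of $T_v'$ that is being cleared remains attached to the previously cleared region at every step. These are exactly the mechanics underlying the free-searcher/reorderability discussion already present in Section~\ref{sec:preliminaries}, so the argument reduces to citing that principle rather than reproving it.
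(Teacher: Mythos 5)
Your $(\Leftarrow)$ direction is fine, but the $(\Rightarrow)$ direction has a genuine gap, and it sits exactly where the real content of the lemma lies. You assume that the moves of $\strategy''$ acting on $T_v'$ can be pushed to the front and that the resulting block $\strategy_v$ clears \emph{all} of $T_v'$, so that $\border(\strategy_v)=\emptyset$ and minimality is automatic. Neither claim holds in general. The reorderability discussion in Section~\ref{sec:preliminaries} only lets you postpone non-clearing slides of free searchers along already-clear edges; it does not let you commute \emph{clearing} moves across different subtrees, because permuting those moves changes the guard sets and hence the searcher count. In an optimal extension the moves on $T_v'$ are typically interleaved with moves on other components precisely because clearing more of $T_v'$ in one go would exceed $k$: the example following the lemma (Figure~\ref{pic:extensions}) is a direct counterexample, where the $(12,r)$-minimal extension must first run an $(8,w)$-minimal strategy that clears only three edges of $T_w$, leaving $\border(\strategy_w)=\{y,z\}$, and the full clearing of $T_w$ (the $(12,w)$-minimal strategy) cannot be used to extend $\strategy$ at all. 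A second, related problem is your choice of $v$ as the endpoint of the first edge cleared by $\strategy''$: for that $v$ there may exist no $(k',v)$-minimal partial search whatsoever, since the definition requires a partial $k'$-search with $w(\border(\strategy_v))\le w(v)$, and the first few moves into $T_v'$ may only produce borders heavier than $w(v)$.

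The paper's proof supplies the two ingredients you are missing. First, $v$ is chosen via the minimum move index $l$ at which $w(\border(\strategy_1[l])\cap V(T_v'))<w(v)$ for some $v\in\border(\strategy)$; such an $l$ exists because $\strategy$ is not minimal, and restricting $\strategy_1[\preceq l]$ to $T_v'$ witnesses that a partial $k'$-search of $T_v'$ with border weight below $w(v)$ exists, so a $(k',v)$-minimal $\strategy_v$ exists. Second --- and this is the step your proposal has no analogue of --- one must show that replacing the original interleaved work on $T_v'$ by this possibly \emph{different} $\strategy_v$, and then replaying the remaining moves as $\strategy_2$, stays within $k$ searchers. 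That requires the cut argument establishing $w(X_U)\ge w(U)$ (inequality~(\ref{eq:X_U})): any set of vertices separating $v$ from $\border(\strategy_v)$ must weigh at least $w(\border(\strategy_v))$, or else truncating $\strategy_v$ at the cheaper cut would contradict its minimality. Without an argument of this kind, the claim $\sn(\strategy\oplus\strategy_v\oplus\strategy_2)\le k$ is unsupported, so the proposal does not prove the lemma.
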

\begin{proof}
The ``only if'' part is obvious. To prove the ``if'' part let $\strategy\oplus\strategy_1$ be a $(k,r)$-minimal partial connected search for $T_r$. For each $v\in\border(\strategy)$ there exists a contaminated edge in $E_v$, which gives that there exists in $T_r-\clearE(\strategy)$ a nonempty subtree $T_r'$ rooted at $v$. (If all edges in $E_v$ are contaminated, then $T_v'=T_v$.) First we argue that there exist $v\in \border(\strategy)$ and a partial $(k-w(\border(\strategy)\setminus\{v\}),v)$-minimal connected search $\strategy_v$ for $T_v'$. For each $v\in\border(\strategy)$ and for each move $\strategy_1[i]$ define $B(i,v)=\border(\strategy_1[i])\cap V(T_v')$. Find minimum $l$ such that $w(B(l,v))<w(v)$ for some $v\in\border(\strategy)$. Such an integer $l$ does exist, because otherwise $w(\border(\strategy\oplus\strategy_1))\geq\border(\strategy)$ which contradicts the minimality of $\strategy\oplus\strategy_1$. Let $\strategy_v'$ be $\strategy_1$ restricted to the edges in $\clearE(\strategy_1[\preceq l])\cap E(T_v')$. By the minimality of $l$, $\strategy\oplus\strategy_v'$ uses at most $k$ searchers (which gives that $\sn(\strategy_v')\leq k-w(\border(\strategy)\setminus\{v\})$), and $w(\border(\strategy_v'))=w(B(l,v))<w(v)$. So, the set of partial $(k-w(\border(\strategy)\setminus\{v\}))$-search strategies $\strategy_v'$ for $T_v'$ satisfying $w(\border(\strategy_v'))<w(v)$ is nonempty and, by the definition, a strategy $\strategy_v$ with the minimal $w(\border(\strategy_v))$ is $(k-w(\border(\strategy)\setminus\{v\}),v)$-minimal.

We will use $\strategy_1$ to extend $\strategy\oplus\strategy_v$ to a partial $(k,r)$-minimal connected search $\strategy\oplus\strategy_v\oplus\strategy_2$ for $T_r$. To obtain $\strategy_2$ we simply remove from $\strategy_1$ all the operations of clearing the edges in $\clearE(\strategy_v)$, preserving the order of clearing the remaining edges in $\strategy_1$. One can prove that $\strategy\oplus\strategy_v\oplus\strategy_2$ is connected.

Since $w(\border(\strategy\oplus\strategy_v\oplus\strategy_2))\leq w(\border(\strategy\oplus\strategy_1))$, it remains to prove that $\sn(\strategy\oplus\strategy_v\oplus\strategy_2)\leq k$. By the definition, $\sn(\strategy\oplus\strategy_v)\leq k$, so let us consider a move $(\strategy\oplus\strategy_v\oplus\strategy_2)[i_2]$ of clearing an edge $e$, $i_2>|\strategy\oplus\strategy_v|$. Select $i_1>|\strategy|$ so that $(\strategy\oplus\strategy_1)[i_1]$ is the move of clearing $e$. Now we prove that $|(\strategy\oplus\strategy_v\oplus\strategy_2)[i_2]|\leq|(\strategy\oplus\strategy_1)[i_1]|$. Let
\begin{equation} \label{eq:U_def}
U=\border((\strategy\oplus\strategy_v\oplus\strategy_2)[i_2])\setminus \border((\strategy\oplus\strategy_1)[i_1]).
\end{equation}
In other words, $U$ is the set of vertices guarded in step $i_2$ of $\strategy\oplus\strategy_v\oplus\strategy_2$ but unguarded in step $i_1$ of $\strategy\oplus\strategy_1$. Clearly, $U\subseteq\border(\strategy_v)$. For each $u\in U$ there exists a vertex $x_u\in \border((\strategy\oplus\strategy_1)[i_1])$ on the path connecting $v$ and $u$ in $T_v'$. Let $X_U$ be the set of all such vertices $x_u$, $u\in U$. We have that
\begin{equation} \label{eq:X_U}
w(X_U)\geq w(U).
\end{equation}
To prove~(\ref{eq:X_U}) assume for a contradiction that it does not hold. Find a set $X$, with minimum $w(X)$, such that each path connecting $v$ and $u$, $u\in\border(\strategy_v)$, contains a vertex in $X$ (possibly $u$). We have $w(X)<w(\border(\strategy_v))$, because $U\subseteq\border(\strategy_v)$. Let us create $\strategy_v'$ which clears the edges in
\[\clearE(\strategy_v)\setminus\bigcup_{x\in X}E(T_{x})\]
in the same order as they are cleared in $\strategy_v$. We have $\sn(\strategy_v')\leq\sn(\strategy_v)$ and $w(\border(\strategy_v'))=w(X)<w(\border(\strategy_v))$. Thus, $\strategy_v$ is not $(k-w(\border(\strategy)\setminus\{v\}),v)$-minimal --- a contradiction, which proves~(\ref{eq:X_U}). Hence, $|(\strategy\oplus\strategy_v\oplus\strategy_2)[i_2]|\leq|(\strategy\oplus\strategy_1)[i_1]|\leq\csn(T_r)$. Since $i_2$ has been chosen arbitrarily, we have proven the thesis.
\end{proof}

As an example consider a tree in Figure~\ref{pic:extensions}(a). Assume that we start by clearing three edges $ru$, $rv$, $rw$ (in this order) and let $\strategy$ by such a partial search strategy. We have that $\sn(\strategy)=12$ and $\border(\strategy)=\{u,v,w\}$. Let us look at search strategies for selected subtrees. Denote by $\strategy_x$, $\strategy_v$, $\strategy_y$ and $\strategy_z$ search strategies for $T_x$, $T_v$, $T_y$ and $T_z$, respectively, such that the branches of the corresponding subtrees are cleared starting with the one on the left hand side, while the right branch is cleared last in all cases. They are \mbox{$(12,x)$-,} \mbox{$(8,v)$-,} \mbox{$(9,y)$-} and $(11,z)$-minimal, respectively. Also, there exist a partial $(8,u)$-minimal search $\strategy_u$ for $T_u$ with $\border(\strategy_u)=\{x\}$ (this strategy clears the two edges on the path connecting $u$ and $x$) and a partial $(8,w)$-minimal search strategy $\strategy_w$ for $T_w$, where $\border(\strategy)=\{y,z\}$ ($\strategy_w$ clear the three edges on the paths connecting $w$ and $y,z$). Suppose that we want to find a connected $12$-search strategy for $T_r$. In order to do it we extend $\strategy$. We have to find a $(12-w(\border(\strategy)\setminus\{a\}),a)$-minimal search, where $a\in\border(\strategy)$. For $a=u$ ($a=v$) we need a $(12-5,u)$-minimal (respectively $(12-6,v)$-minimal) search strategy, so $\strategy_u$ ($\strategy_v$, resp.) does not suffice. However, $\strategy_w$ is $(8,w)$-minimal and $12-w(\border(\strategy)\setminus\{w\})=12-3\geq 8$, so the moves of $\strategy_w$ as a part of $\strategy\oplus\strategy_w$ use $w(\border(\strategy)\setminus\{w\})+\sn(\strategy_w)=11$ searchers and $\border(\strategy\oplus\strategy_w)=\{u,v,y,z\}$. Now we can extend $\strategy$ by using $\strategy_u$, $\strategy_v$, $\strategy_y$ or $\strategy_z$, but only one extension, namely $\strategy\oplus\strategy_w\oplus\strategy_u$ uses no more than $12$ searchers. We have $\border(\strategy\oplus\strategy_w\oplus\strategy_u)=\{x,v,y,z\}$. The final extension (the only one possible) is $\strategy\oplus\strategy_w\oplus\strategy_u\oplus\strategy_v\oplus\strategy_y\oplus\strategy_z\oplus\strategy_x$. Note that not all minimal strategies have been listed --- there exist a $(12,w)$-minimal strategy (namely $\strategy_w\oplus\strategy_y\oplus\strategy_z$) for $T_w$ and a $(12,u)$-minimal one ($\strategy_u\oplus\strategy_x$) for $T_u$, but it is easy to check that none of those can be used to extend $\strategy$.
\begin{figure}[htb]
\begin{center}
\includegraphics[scale=1]{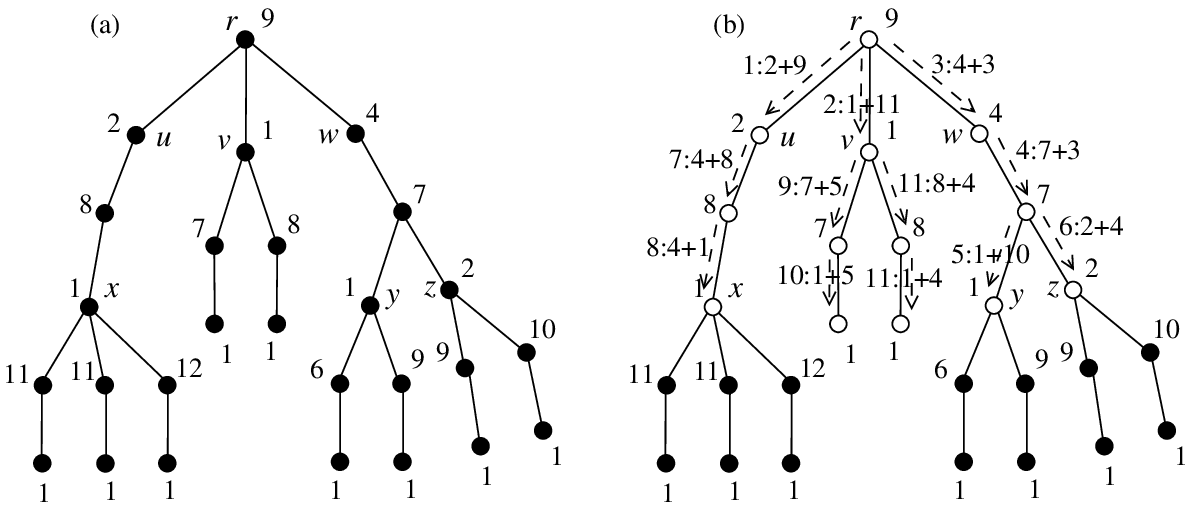}
\caption{(a) node weighted tree $T_r$; (b) $\strategy\oplus\strategy_w\oplus\strategy_u\oplus\strategy_v$}
\label{pic:extensions}
\end{center}
\end{figure}
Figure~\ref{pic:extensions}(b) depicts a partial strategy $\strategy\oplus\strategy_w\oplus\strategy_u\oplus\strategy_v$, where the dashed arrows represent the moves of the strategy. Their labels $i:c+g$ indicate the number $i$ of the corresponding clearing move, while $c$ and $g$ are, respectively, the number of searchers used for clearing and guarding in the move.

\section{Efficient algorithm for bounded-degree trees}
\label{sec:bounded_degrees}

In this section we provide a polynomial-time optimal algorithm for bounded degree trees. In an informal way, it may be described as follows. We start with placing $k$ searchers at the root $r$ of $T$. Assume that the algorithm calculated a partial search strategy $\strategy$. If $\border(\strategy)=\emptyset$ then $\strategy$ clears $T_r$ and the computation stops. Otherwise we select a vertex $v\in\border(\strategy)$ and we find a partial connected search $\strategy_v$ for $T_v$. We continue with $\strategy\oplus\strategy_v$. Note that $\strategy\oplus\strategy_v$ requires $\sn(\strategy)$ to perform $\strategy$ and then the moves of $\strategy_v$ follow, where $w(\border(\strategy)\setminus\{v\})$ searchers are used to guard the vertices that are not in $T_v$ and, in addition, $\sn(\strategy_v)$ searchers work on the subtree $T_v$. So, if $\strategy$ can be extended to a connected $k$-search for $T_r$ and we are able to find a $(k-w(\border(\strategy)\setminus\{v\}),v)$-minimal strategy $\strategy_v$, then, by Lemma~\ref{lem:minimal_S_v}, we have that $\strategy\oplus\strategy_v$ can be extended to a connected $k$-search for $T_r$. The fact that any such vertex $v$ is sufficient reduces the size of the search space for the algorithm. However, it follows immediately from the NP-completeness proof in Section~\ref{sec:hard} that finding a strategy $\strategy_v$ is intractable, unless P=NP. We point out here that Lemma~\ref{lem:minimal_S_v} will not be needed in its most general form, because we will apply it for $T_v'=T_v$, i.e. when we select a vertex $v\in\border(\strategy)$ and the corresponding search strategy $\strategy_v$, then all the edges in $E_v$ are contaminated at the end of $\strategy$.

For each $v\in V(T_r)$ a set $\collection_v$ is a global variable and will contain partial $(k,v)$-minimal connected search strategies for a subtree $T_v$, for selected values of $k$.

We start by describing a procedure, called $\FindMinimal$ (\emph{Minimal Connected Partial Strategy}), which for given integer $k$, a rooted tree $T_r$, and an ordering $rv_1,\ldots,rv_d$ of the edges incident to $r$, finds a $(k,r)$-minimal partial connected search strategy $\strategy$, which clears the edges in $E_r$ according to the given order, whenever such a strategy exists. Our final algorithm will process $T_r$ in a bottom-up fashion, so when $\FindMinimal$ is called, then for each $v\in V(T_r)\setminus\{r\}$ some $(k',v)$-minimal search strategies for $T_v$ belong to $\collection_v$ for some integers $k'$. Moreover, $w(r)$ searchers already occupy $r$ when $\FindMinimal$ starts. The procedure is as follows:
\begin{list}{}{}
\item[Step 1.] For each $i=1,\ldots,d-1$ repeat the following: (i) if $k$ searchers are sufficient to clear $rv_i$, then clear $rv_i$ as the next step of $\strategy$ and find $(k',v_i)$-minimal search $\strategy_{v_i}\in\collection_r$ with maximum $k'$, $k'\leq k-w(\border(\strategy)\setminus\{v_i\})$. If $\strategy_{v_i}$ exists, then let $\strategy:=\strategy\oplus\strategy_{v_i}$, otherwise proceed to $i+1$; (ii) if more than $k$ searchers are needed to clear $rv_i$, then return `failure'.
\item[Step 2.] Clear $rv_d$. (If $k'$ searchers are not sufficient to do it, then return `failure'.) While there exist $v\in \border(S)$ and $\strategy_v\in\collection_v$ such that $\strategy_v$ is $(k',v)$-minimal, $k'\leq k-w(\border(\strategy)\setminus\{v\})$, then $\strategy:=\strategy\oplus\strategy_v$.
\item[Step 3.] Return $\strategy$.
\end{list}

\begin{lemma} \label{lem:FindMinimal_works}
If $\strategy$ can be extended to a $(k,r)$-minimal search strategy that clears the edges in $E_r$ according to the order $\pi=(rv_1,\ldots,rv_d)$, then $\FindMinimal$ returns such a strategy.
\end{lemma}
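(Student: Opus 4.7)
Let $\strategy^{\mathrm{opt}}$ denote a $(k,r)$-minimal connected search for $T_r$ that clears the edges of $E_r$ in the order $\pi$ (guaranteed by the hypothesis of the lemma). I plan to prove, by induction on the number of operations performed so far by $\FindMinimal$, the following invariant: the partial strategy $\strategy$ currently held by the algorithm admits an extension to some $(k,r)$-minimal connected search for $T_r$ that clears the edges of $E_r$ not yet cleared by $\strategy$ in the induced subsequence of $\pi$. Once this invariant is established, at the end of Step~2 it forces the remaining extension to be empty (since $\FindMinimal$ halts only when $\border(\strategy)=\emptyset$), so the returned $\strategy$ is itself $(k,r)$-minimal.

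Two kinds of operations must be analysed in the inductive step. The first is the clearing of an edge $rv_i$ in Step~1(i) or Step~2: the extension supplied by the invariant already clears $rv_i$ as its next $E_r$-move, so $k$ searchers suffice and the ``failure'' branch is never taken; the invariant is re-established by stripping this single clearing off the hypothetical extension. The second operation is the optional replacement $\strategy\leftarrow\strategy\oplus\strategy_{v_i}$ using a $(k',v_i)$-minimal $\strategy_{v_i}\in\collection_{v_i}$ with the largest $k'$ subject to $k'\leq k-w(\border(\strategy)\setminus\{v_i\})$. Here I intend to invoke Lemma~\ref{lem:minimal_S_v} with the forced choice $v=v_i$ and $T_{v_i}'=T_{v_i}$ (all edges of $E_{v_i}$ are still contaminated at that point, as remarked just before the algorithm). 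The maximality of $k'$ makes $w(\border(\strategy_{v_i}))$ no larger than the border weight of the strategy $\strategy_v$ produced in the proof of Lemma~\ref{lem:minimal_S_v}, and inspection of that proof---where the remainder $\strategy_2$ is obtained simply by deleting from the old extension the edges cleared by $\strategy_v$---shows that the substituted $\strategy_{v_i}$ still yields a $(k,r)$-minimal $\strategy\oplus\strategy_{v_i}\oplus\strategy_2$ that respects the $\pi$-order.

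The main obstacle is that Lemma~\ref{lem:minimal_S_v} only asserts the existence of \emph{some} vertex in $\border(\strategy)$ at which the next block can attach, whereas Step~1 insists rigidly on $v=v_i$. I plan to close this gap by an exchange argument on the hypothetical extension: since the subtree $T_{v_i}$ shares with the rest of $T$ only the guarded vertex $v_i$, all moves of the extension that fall in $E(T_{v_i})$ can be pulled forward so as to immediately follow the clearing of $rv_i$, up to the first step at which the border inside $T_{v_i}$ has weight at most $w(v_i)$; this rearrangement preserves connectedness and does not raise the peak searcher count, so the prefix of the rearranged strategy restricted to $T_{v_i}$ is a $(k-w(\border(\strategy)\setminus\{v_i\}),v_i)$-minimal partial search, legitimising the application of Lemma~\ref{lem:minimal_S_v} with $v=v_i$. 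When no eligible $\strategy_{v_i}$ is present in $\collection_{v_i}$, the invariant is preserved by inheriting the extension unchanged after deleting the clearing of $rv_i$, and the algorithm is free to proceed to $v_{i+1}$; the analogous argument in Step~2 is simpler because the while-loop is allowed to pick any vertex in $\border(\strategy)$, and Lemma~\ref{lem:minimal_S_v} directly supplies one as long as $\strategy$ is not yet $(k,r)$-minimal.
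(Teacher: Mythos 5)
Your inductive skeleton is the same as the paper's (maintain the invariant that the current partial strategy extends to a $(k,r)$-minimal search respecting the remainder of $\pi$, and use Lemma~\ref{lem:minimal_S_v} at each attachment step), and you correctly flag the one real subtlety: Lemma~\ref{lem:minimal_S_v} only asserts the existence of \emph{some} $v\in\border(\strategy)$ at which a minimal block can be attached, while Step~1 forces $v=v_i$. However, your proposed repair does not work. The claim that pulling all $E(T_{v_i})$-moves of the hypothetical extension forward to just after the clearing of $rv_i$ ``does not raise the peak searcher count'' is false in general: at their original positions those moves coexist with whatever the border outside $T_{v_i}$ has become by then, and the extension is free to first shrink the borders of \emph{other} subtrees $T_{v_j}$ below $w(v_j)$ before touching $T_{v_i}$. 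After your rearrangement the $T_{v_i}$-moves instead face the full weight $w(\border(\strategy_i)\setminus\{v_i\})$ outside $T_{v_i}$, so the rearranged prefix may need more than $k$ searchers; in particular it need not witness the existence of a $(k-w(\border(\strategy_i)\setminus\{v_i\}),v_i)$-minimal strategy. The correct way to close the gap is different: existence of $\strategy_{v_i}$ is not something to be extracted from the extension at all --- it is checked by the algorithm against $\collection_{v_i}$ (complete by the induction on subtree size underlying Lemma~\ref{lem:finds_minimal}) --- and what must be shown is that attaching an \emph{existing} $(k-w(\border(\strategy_i)\setminus\{v_i\}),v_i)$-minimal $\strategy_{v_i}$ preserves extendability. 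For that one reuses the second half of the proof of Lemma~\ref{lem:minimal_S_v} (the construction of $\strategy_2$ by deletion and the counting argument $w(X_U)\geq w(U)$), which depends only on the minimality of the attached block and not on $v$ being the ``first'' vertex whose internal border drops below its weight; the minimum-$l$ choice there is needed only to prove existence, which here is given.

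There is a second, independent error in your concluding step: $\FindMinimal$ does \emph{not} halt only when $\border(\strategy)=\emptyset$. The while-loop in Step~2 terminates as soon as no $v\in\border(\strategy)$ admits an eligible $(k',v)$-minimal strategy in $\collection_v$ with $k'\leq k-w(\border(\strategy)\setminus\{v\})$, and the returned strategy is in general a \emph{partial} strategy with nonempty border (this is essential: these partial outputs are exactly what populate $\collection_r$ for small $k$). So you cannot conclude minimality of the output by arguing the remaining extension is empty. The paper instead argues by contradiction: if the returned $\strategy$ were not $(k,r)$-minimal, then since it is still extendable to a $(k,r)$-minimal search, Lemma~\ref{lem:minimal_S_v} together with the completeness of the collections $\collection_v$ would produce an eligible pair $(v,\strategy_v)$, contradicting the termination of the while-loop. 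You need this argument (or an equivalent one) to finish.
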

\begin{proof}
Assume that there exists a partial $(k,r)$-minimal search strategy $\strategy_{\textup{opt}}$ clearing the edges in $E_r$ according to the order $\pi$. Let, for brevity, $\strategy_i$ denote the partial connected search strategy calculated in Steps~1-2 of $\FindMinimal$, where clearing $rv_i$ is the last move of $\strategy_i$, $i=1,\ldots,d$.

Now we use induction on $i=1,\ldots,d$ to prove that $\strategy_i$ can be extended to $(k,r)$-minimal search for $T_r$. The claim follows immediately for $i=1$, since by assumption, $\strategy_{\textup{opt}}$ starts by clearing $rv_1$. (For a connected search starting at $r$ an edge in $E_r$ has to be cleared first.) Assume that $rv_i$ has been cleared by $\strategy_i$, $i<d$. The procedure $\FindMinimal$ proceeds in Step~1 by finding a $(k-w(\border(\strategy_i)\setminus\{v_i\}),v_i)$-minimal partial connected search $\strategy_{v_i}$ for $T_{v_i}$. By Lemma~\ref{lem:minimal_S_v}, $\strategy_i\oplus\strategy_{v_i}$ can be extended to a $(k,r)$-minimal connected search for $T_r$. By the definition, there is no $v\in\border(\strategy_i\oplus\strategy_{v_i})\setminus\{r\}$ for which there exists a $(k-w(\border(\strategy_i\oplus\strategy_{v_i})\setminus\{r\}),v)$-minimal partial connected search for $T_v$. Thus, the next edge $e$ cleared by $\strategy_i\oplus\strategy_{v_i}$ must be in $E_r$. Hence, $e=rv_{i+1}$ which results in strategy $\strategy_{i+1}$.

Thus, we obtain that $\strategy_d$ can be extended to a $(k,r)$-minimal connected search for $T_r$. Then, $\FindMinimal$ finds in Step~2 a sequence of vertices $v_{d+1},\ldots,v_{d+l}$ and search strategies $\strategy_{d+1},\ldots,\strategy_{d+l}$ such that $\strategy_{d+i}$ is $(k-w(\border(\strategy_{d}\oplus\cdots\oplus\strategy_{d+i-1})\setminus\{v_{d+i}\}),v_{d+i})$-minimal and $v_{d+i}\in \border(\strategy_{d}\oplus\cdots\oplus\strategy_{d+i-1})$. By Lemma~\ref{lem:minimal_S_v}, each strategy $\strategy_{d}\oplus\cdots\oplus\strategy_{d+i}$, $i=0,\ldots,l$, can be extended to a $(k,r)$-minimal search for $T$.

Let $\strategy=\strategy_{d}\oplus\cdots\oplus\strategy_{d+l}$. We have that $\strategy$ is $(k,r)$-minimal, because otherwise, as proved above, it can be extended to a $(k,v)$-minimal search for $T_r$, and consequently, by Lemma~\ref{lem:minimal_S_v}, there exists $v\in\border(\strategy)$ and a $(k-w(\border(\strategy)\setminus\{v\}),v)$-minimal search $\strategy_{v}$ such that $\strategy\oplus\strategy_v$ can be extended to a $(k,v)$-minimal search for $T_r$, which gives a contradiction with the fact that no such vertex has been found following $v_{d+l}$ by $\FindMinimal$.
\end{proof}

Now we are ready to give a listing of the algorithm $\Alg$ (\emph{Connected Searching of a Tree}) for finding an optimal connected search strategy for a rooted tree $T_r$. This algorithm is exponential in the maximum degree of $T$, $\Delta=\max\{\deg_T(v)\colon v\in V(T)\}$.

\begin{list}{}{}
\item[Step 1.] For each son $v$ of $r$ call $\Alg(T_r)$. This step guarantees that for each $v\in V(T)\setminus\{r\}$ the collection $\collection_v$ of all minimal search strategies for $T_v$ is calculated (which is necessary also for subsequent calls of $\FindMinimal$).
\item[Step 2.] Fix a permutation $\pi=(rv_1,\ldots,rv_d)$ of the edges in $E_r$. Set $k:=1$. If Step~3 has been executed for all the $d!$ permutations $\pi$, then Exit.
\item[Step 3.] Call $\FindMinimal(k,T_r,\pi)$. If the `failure' has been returned, then increase $k$ and repeat Step~3. If a search strategy $\strategy_r$ has been returned and there is no $\strategy\in\collection_r$ such that $w(\border(\strategy))\leq w(\border(\strategy_r))$ and $\sn(\strategy)\leq\sn(\strategy_r)$ then add $\strategy_r$ to $\collection_r$ and remove from $\collection_r$ all search strategies $\strategy\neq\strategy_r$ such that $w(\border(\strategy))\geq w(\border(\strategy))$ and $\sn(\strategy)\geq\sn(\strategy_r)$. If $\border(\strategy_r)=\emptyset$ then go to Step~2 to fix the next permutation $\pi$. Otherwise increase $k$ and repeat Step~3.
\end{list}

\begin{lemma} \label{lem:finds_minimal}
Let $k$ be an integer. The set $\collection_v$ contains a partial $(k,r)$-minimal connected partial search strategy for $T_r$ whenever such a strategy exists.
\end{lemma}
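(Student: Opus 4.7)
The plan is to prove the claim by induction on the height of $T_r$. The base case where $r$ has no sons is immediate: $E(T_r) = \emptyset$, the trivial placement of $w(r)$ searchers at $r$ is the only partial connected search, and it vacuously satisfies the $(k,r)$-minimality condition.

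For the inductive step, Step~1 of $\Alg$ recursively processes each subtree $T_v$ for $v$ a son of $r$. By the inductive hypothesis, at the end of these recursive calls each $\collection_v$ contains a $(k', v)$-minimal connected partial search for every $k'$ for which such a strategy exists. This is precisely the precondition needed for the correctness of $\FindMinimal$ on $T_r$ as asserted by Lemma~\ref{lem:FindMinimal_works}, because $\FindMinimal$ relies on exactly these collections to assemble its subroutine strategies, and the maximum-$k'$ selection rule in Step~1 of $\FindMinimal$ then automatically picks up a $(k - w(\border(\strategy) \setminus \{v_i\}), v_i)$-minimal element whenever one exists.

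Now suppose a $(k, r)$-minimal partial connected search $\strategy^*$ for $T_r$ exists, and let $\pi^* = (rv_1, \ldots, rv_d)$ be the order in which $\strategy^*$ clears $E_r$. The outer loop of $\Alg$ iterates over all $d!$ permutations of $E_r$ and, for each permutation, increments $k$ starting from $1$. If the inner loop for $\pi^*$ reaches the value $k$, then by Lemma~\ref{lem:FindMinimal_works} the call $\FindMinimal(k, T_r, \pi^*)$ returns a $(k,r)$-minimal strategy $\strategy_r$ that is offered for admission to $\collection_r$. Otherwise the inner loop terminated at some earlier $k' < k$ because a complete strategy $\strategy'$ (with $\border(\strategy') = \emptyset$) was produced for $\pi^*$; but then $\strategy'$ is also a $k$-search with $w(\border(\strategy')) = 0$, so the minimum of $w(\border)$ over all connected $k$-searches of $T_r$ is $0$, whence $w(\border(\strategy^*)) = 0$ and $\strategy'$ itself is $(k, r)$-minimal and already resides in $\collection_r$.

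The main delicate point is that the admission and pruning rules of Step~3 never destroy the witness. The admission test rejects an offered strategy $\strategy_r$ only when some $\strategy \in \collection_r$ already satisfies $w(\border(\strategy)) \leq w(\border(\strategy_r))$ and $\sn(\strategy) \leq \sn(\strategy_r) \leq k$; by $(k, r)$-minimality of $\strategy_r$ the first of these inequalities is actually an equality, so $\strategy$ is itself $(k, r)$-minimal. Symmetrically, pruning removes an existing strategy only in favour of a newly inserted one that dominates it in both coordinates and hence is also $(k, r)$-minimal for the same $k$. Consequently, once the pair $(\pi^*, k)$ has been processed, some $(k, r)$-minimal strategy for $T_r$ remains in $\collection_r$ for the remainder of the execution, completing the inductive step.
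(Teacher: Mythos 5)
Your proof is correct and follows essentially the same route as the paper: structural induction over the tree (the paper inducts on the number of vertices, you on the height, which is immaterial), using the inductive hypothesis to certify that Step~1 of \Alg{} populates each $\collection_v$ correctly and then invoking Lemma~\ref{lem:FindMinimal_works} for the root. Your two additional observations --- that early termination of the inner loop at $k'<k$ still leaves a $(k,r)$-minimal witness because a border-free strategy is minimal for every larger $k$, and that the admission/pruning rules of Step~3 only ever replace a $(k,r)$-minimal witness by another one --- are correct fillings-in of details the paper's proof leaves implicit.
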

\begin{proof}
We prove the lemma by induction on the number of the vertices of a tree. For a tree with one vertex the claim follows.

Let $T$ be a tree with $n>1$ vertices. By the induction hypothesis, after Step~1 of $\Alg$, the set $\collection_v$ contains a $(k',v)$-minimal connected search strategy for each $v\in V(T)\setminus\{r\}$ and for each $k'\geq 1$ whenever such a strategy exists.

Then, $\Alg$ iterates over all permutations $\pi$ of the edges in $E_r$ and for each permutation all integers $k$ are used (we stop when a strategy clearing $T_r$ has been found). Lemma~\ref{lem:FindMinimal_works} gives the thesis.
\end{proof}
Lemma~\ref{lem:finds_minimal} in particular implies, that $\Alg$ finds an optimal solution to the $\csfp$ problem, because an optimal connected search strategy $\strategy$ is $(\csn(T_r),r)$-minimal and $\border(\strategy)=\emptyset$. Now we finish this section with some complexity remarks.

\begin{lemma} \label{lem:running_time}
Given a bounded degree tree $T$, the running time of the algorithm $\Alg$ is $O(n^3\log n)$, where $n=|V(T)|$.
\end{lemma}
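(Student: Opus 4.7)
The plan is to chain together three bounds: on the size of each collection $\collection_v$, on the cost of one call to $\FindMinimal$, and on the number of such calls made at each vertex of $T$.

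First I would bound $|\collection_v|$ for every $v \in V(T)$. By the pruning rule in Step~3 of $\Alg$, the collection $\collection_v$ contains only strategies that are Pareto-optimal with respect to the pair $(\sn(\strategy), w(\border(\strategy)))$. Moreover, by the definition of $(k,v)$-minimality we have $w(\border(\strategy)) \leq w(v)$, and both coordinates are non-negative integers bounded polynomially in the input size. This gives $|\collection_v| = O(n)$, and also allows $\collection_v$ to be maintained as a structure sorted by $k$.

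Second I would estimate the running time of a single call to $\FindMinimal$. Since $T$ has bounded degree, $\Delta = O(1)$, so the number of edges processed in Steps~1--2 of $\FindMinimal$ is constant. The dominant operation is, for each such edge $rv_i$, locating a $(k',v_i)$-minimal strategy in $\collection_{v_i}$ with the largest feasible $k'$; with $\collection_{v_i}$ sorted by $k$ this is done via binary search in $O(\log |\collection_{v_i}|) = O(\log n)$. Thus each call to $\FindMinimal$ costs $O(\log n)$.

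Third I would count the number of calls to $\FindMinimal$ per vertex. At each vertex the outer algorithm iterates over $\Delta! = O(1)$ permutations $\pi$, and for each one invokes $\FindMinimal$ with $k = 1, 2, \ldots$ until either a complete search for $T_r$ is returned or $k$ reaches $\csn(T_r)$. Since $\csn(T_r)$ is bounded by the total weight of $T$, which is $O(n^2)$ under the polynomial-weight assumption needed for the algorithm to run in polynomial time at all, there are $O(n^2)$ invocations per vertex. Combining, the total running time is $n \cdot O(n^2) \cdot O(\log n) = O(n^3 \log n)$.

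The main obstacle I expect is the third bound: one must argue that although $k$ is incremented by $1$ at a time, the per-iteration cost stays at $O(\log n)$ even when the iteration is ``wasted'' (i.e., $\collection_r$ is not updated), so that the linear scan over admissible $k$ does not blow up the running time. This is precisely why the sorted structure from the first step is needed, and why the weights must be polynomially bounded.
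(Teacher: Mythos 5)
There is a genuine gap, in fact two, and they happen to cancel numerically so that your final product still reads $O(n^3\log n)$; neither intermediate bound is correct as stated. First, a single call to $\FindMinimal$ does not cost $O(\log n)$. Bounded degree only bounds the number of edges $rv_1,\ldots,rv_d$ handled in Step~1; Step~2 of $\FindMinimal$ is a \emph{while} loop over vertices $v\in\border(\strategy)$, and $\border(\strategy)$ grows as sub-strategies are appended, so the loop can execute $\Theta(n)$ times (each iteration clears at least one new edge of $T_r$, and there are up to $n-1$ of them); moreover each composition $\strategy:=\strategy\oplus\strategy_v$ is not a constant-time operation. The paper charges $O(n\log n)$ per call of $\FindMinimal$, not $O(\log n)$. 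Second, your count of the number of calls per vertex relies on the total weight of $T$ being $O(n^2)$, an assumption the lemma does not make: the algorithm is claimed to work for \emph{arbitrary} weights, so $\csn(T_r)$ can be exponential in $n$ and a literal $k:=k+1$ loop would not terminate in polynomial time at all. The paper's actual mechanism is that ``increase $k$'' jumps directly to the next value of $k$ for which the output of $\FindMinimal$ changes (this value can be recorded during the current execution), and there are at most $n$ distinct outputs per permutation $\pi$ because distinct outputs $\strategy_i\neq\strategy_j$ with $i<j$ satisfy $\clearE(\strategy_i)\subsetneq\clearE(\strategy_j)$. This nesting argument is the key idea your proposal is missing; it yields $O(n)$ iterations of Step~3 per permutation with no assumption on the weights.

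A smaller but related issue: your bound $|\collection_v|=O(n)$ does not follow from Pareto-optimality of the pairs $(\sn(\strategy),w(\border(\strategy)))$ together with the coordinates being ``polynomially bounded integers'' --- that would only give a polynomial bound on the size of the Pareto front (and, again, the weights need not be polynomially bounded). The paper obtains $|\collection_v|\leq\Delta!\,n=O(n)$ from the same nested-clear-set observation: for each of the $\Delta!=O(1)$ permutations at most $n$ distinct strategies are ever returned. So the one structural fact you need to add --- strict nesting of cleared edge sets as $k$ grows --- simultaneously repairs the collection-size bound and the iteration count, after which the correct accounting is $n$ vertices times $O(n)$ values of $k$ per permutation times $O(n\log n)$ per call to $\FindMinimal$.
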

\begin{proof}
Denote by $\strategy_i$ the connected search strategy $\strategy$ calculated by $\FindMinimal$ for $k=i$, and for fixed $T_v$ and $\pi$. For a given permutation $\pi$ there are at most $n$ different search strategies that can be returned by $\FindMinimal$, because if $\strategy_i\neq\strategy_j$, $i<j$, then $\clearE(\strategy_i)\subsetneq\clearE(\strategy_j)$. This means that $|\collection_v|\leq\Delta!n=O(n)$. We maintain $\collection_v$ as a balanced binary search tree which gives that inserting, removing and finding search strategies takes $O(\log n)$ time. This implies $O(n\log n)$ running time of $\FindMinimal$.

As to the complexity of $\Alg$, we have that it is called $n$ times, once for each vertex. For a fixed permutation $\pi$, Step~3 of $\Alg$ is executed for at most $n$ different values of $k$. (The latter follows from the observation, that the instruction `increment $k$' in $\FindMinimal$ jumps to the next $k$ for which the next strategy found for the same $\pi$ is different, which means that at least one additional edge of $T_v$ will be cleared. The next value of $k$, for which the outcome of $\FindMinimal$ will be different, can be recorded while executing the current execution of $\FindMinimal$.) In one repetition of this step it takes $O(n\log n)$ time to execute $\FindMinimal$, and $O(n\log n)$ time to iterate over $\collection_v$ to remove unnecessary strategies from the collection. So, the running time of Step~3 of $\Alg$ is $O(n^2\log n)$, and the overall execution time of $\Alg$ is $O(n^3\log n)$.
\end{proof}

Since the algorithm solves the $\csfp$ problem, where the starting vertex is given, in order to solve the $\csp$ problem, a straightforward approach is to call $\Alg$ for each vertex of $T$ as the root and the solution is the best strategy found. However, we can reduce the running time. Let $v\in V(T)$. For different roots $r\in V(T)$ for each $v\in V(T)$ there are at most $\deg_T(v)+1$ different subtrees $T_v$ for which $\Alg$ calculates search strategies, namely each neighbor of $v$ can be its father and $v$ may be the root itself. This gives that there are in total at most $\sum_{v\in V(T)}(\deg_T(v)+1)=2|E(T)|+|V(T)|\leq 3n$ different subtrees $T_v$ to consider.
\begin{theorem} \label{thm:polynomial_for_trees}
Given a bounded degree weighted tree $T$, an optimal connected search strategy for $T$ can be computed in $O(n^3\log n)$ time, where $n=|V(T)|$.
\qed
\end{theorem}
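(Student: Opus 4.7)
The plan is to reduce the $\csp$ problem on $T$ to a sequence of $\csfp$ instances with different roots, and to show that enough intermediate work can be shared to keep the total cost at $O(n^3 \log n)$. By equation (\ref{eq:fixed_start}), $\csn(T) = \min\{\csn(T_v) : v \in V(T)\}$, and Lemma~\ref{lem:finds_minimal} (together with the observation following it) shows that a run of $\Alg$ rooted at $v$ produces an optimal strategy for $\csfp$ with that root. Thus running $\Alg$ at every vertex and returning the best strategy is correct; the task is to avoid the naive $n \cdot O(n^3 \log n) = O(n^4 \log n)$ bound.

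The key observation, already hinted at in the paragraph preceding the theorem, is that the collection $\collection_v$ depends only on the rooted subtree $T_v$ itself and not on how $T$ looks outside of $T_v$ or on which vertex is the overall root. As the root $r$ of $T$ varies over $V(T)$, the rooted subtree hanging at a fixed $v$ is determined entirely by which neighbor of $v$ is its father (if any), giving at most $\deg_T(v) + 1$ distinct rooted subtrees per vertex. Summing,
\[
\sum_{v \in V(T)} (\deg_T(v) + 1) \;=\; 2|E(T)| + |V(T)| \;\leq\; 3n,
\]
so across all $n$ possible roots only $O(n)$ distinct rooted subtrees ever appear.

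Accordingly, I would process these $O(n)$ distinct rooted subtrees in order of non-decreasing number of vertices, indexing the cache by the pair (vertex, parent-edge-or-root-marker). This ordering guarantees that whenever Steps 2--3 of $\Alg$ are executed on some $T_v$, the collections $\collection_u$ for all its sons $u$ have already been computed, so Step 1 (the recursion) becomes a cache lookup. The analysis in the proof of Lemma~\ref{lem:running_time} bounds the per-subtree cost of Steps 2--3 by $O(n^2 \log n)$ under the bounded-degree assumption (constant $\Delta!$, at most $n$ distinct values of $k$, and $O(n \log n)$ per invocation of $\FindMinimal$ plus $O(n \log n)$ for collection updates). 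Multiplying yields $O(n) \cdot O(n^2 \log n) = O(n^3 \log n)$ for all roots combined, after which I output $\min_{r \in V(T)} \sn(\strategy_r)$ where $\strategy_r \in \collection_r$ satisfies $\border(\strategy_r) = \emptyset$.

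There is no genuine obstacle: correctness for each fixed root is already given by Lemma~\ref{lem:finds_minimal}, and the only real point to verify is that the cache-sharing is sound, which follows because $\collection_v$ is an intrinsic invariant of the rooted subtree $T_v$. The argument is essentially a careful accounting of work already done in Lemmas~\ref{lem:finds_minimal} and~\ref{lem:running_time}.
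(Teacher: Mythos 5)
Your proposal is correct and follows essentially the same route as the paper: correctness per root comes from Lemma~\ref{lem:finds_minimal} together with equation~(\ref{eq:fixed_start}), and the running time is controlled by the observation that only $\sum_{v}(\deg_T(v)+1)\leq 3n$ distinct rooted subtrees arise over all choices of root, each costing $O(n^2\log n)$ by the accounting in Lemma~\ref{lem:running_time}. The paper leaves the caching mechanics implicit, whereas you spell out the bottom-up ordering and the cache key, but the underlying argument is identical.
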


\section{Connected searching of weighted trees is hard}
\label{sec:hard}

\subsection{Scheduling time-dependent tasks}
\label{subsec:time-dependent}

In this section we recall a problem of scheduling time-dependent (deteriorating) tasks. The execution time of a task depends on its starting time. The set of tasks is denoted by $\jobs=\{J_1,\ldots,J_n\}$. Each task $J_j\in\jobs$ is characterized by two parameters, deadline $d_j$ and running time $p_j$, which depends on $s_j$, the point of time when the execution of $J_j$ starts. The completion time of $J_j$ is $C_j=s_j+p_j$. We are interested in the single machine scheduling. A schedule $D$ is \emph{feasible} if  the completion time  $C_j$ of each task $J_j$ is not greater than its deadline, $C_j\leq d_j$, and the execution times of two different tasks do not overlap. The \emph{makespan} of a schedule $D$ is $\ms(D)=\max\{C_j\colon J_j\in\jobs\}$. Since the execution time depends on the starting point, we will write $p_j(t)$ to refer to the execution time of $J_j$ when it starts at $t\geq 0$. Observe that a schedule $D$ can be described by a permutation $\pi_D\colon\{1,\ldots,|\jobs|\}\to\jobs$, because the idle times between the execution of two consecutive tasks are not necessary for non-decreasing (in time) execution times. In the Time-Dependent Scheduling ($\tds$) problem we ask whether there exists a feasible schedule for $\jobs$. A good survey and a more detailed description of this problem can be found in~\cite{deteriorating_survey}. For a survey on scheduling problems and terminology see~\cite{Blazewicz96,Brucker_SchedulingAlgorithms}.

There are several NP-completeness results for very restricted (linear) functions for execution time of a task \cite{ChengDing03,Kubiak_deteriorating}. However, we need for the reduction described in the next subsection the $\tds$ problem instances, such that each task starts and ends at integers, which are bounded by a polynomial in the number of tasks. This property does not follow directly from the reductions in~\cite{ChengDing03,Kubiak_deteriorating}. For this reason we will prove NP-hardness of the $\tds$ problem instances having the properties we need.

We will reduce the $3$-partition problem~\cite{GareyJohnson79} to $\tds$. The former one can be stated as follows. Given a positive integer $B$, a set of integers $A=\{a_1,\ldots,a_{3m}\}$ such that $\sum_{j=1,\ldots,3m}a_j=mB$ and $B/4<a_j<B/2$ for each $j=1,\ldots,3m$, find subsets $A_1,\ldots,A_m$ of $A$ such that $A=\bigcup_{i=1,\ldots,m}A_i$, $A_i\cap A_{i'}=\emptyset$ for $i\neq i'$, and $\sum_{a_j\in A_i}a_j=B$ for each $i=1,\ldots,m$.

Now, using $B$ and $A$, we define the instance of the $\tds$ problem. Let $L=mB^3+Bm(m+1)/2$. To simplify the statements we partition the interval $[0,L]$ into intervals $I_1,\ldots,I_m$ as follows:
\begin{equation} \label{eq:intervals}
I_i=\Big[(i-1)B^3+\frac{(i-1)i}{2}B,iB^3+\frac{i(i+1)}{2}B\Big),\quad i=1,\ldots,m.
\end{equation}
We use the symbols $l_i$, $r_i$ to denote the endpoints of an interval $I_i$, i.e. $I_i=[l_i,r_i)$, $i=1,\ldots,m$.
Clearly, $\bigcup_{i=1,\ldots,m}I_i=[0,L]$ and $r_i=l_{i+1}$ for each $i=1,\ldots,m-1$. Note that the length of $I_i$ is $|I_i|=B^3+iB$ for each $i=1,\ldots,m$.

Now we define the tasks in the $\tds$ problem. For each $a_j\in A$ we introduce a task $J_j\in\jobs$ with parameters
\[d_j=L,\textup{ and }p_j(t)=ia_j\textup{ for each }t\in I_i.\]
In addition, for each $i=1,\ldots,m$ we define a task $\w{J}_i$ with the deadline $\w{d}_i$ and execution time $\w{p}_i$, where
\[\w{d}_i=l_i+B^3,\textup{ and }\w{p}_i(t)=B^3\textup{ for each }t\geq 0,\]
$i=1,\ldots,m$. Let $\w{\jobs}=\{\w{J}_1,\ldots,\w{J}_m\}$. Observe that in each schedule all tasks are executed within $[0,L]$.

For a given schedule $D$ for $\jobs\cup\w{\jobs}$, $s_j$ and $C_j$ denote, respectively, the start and completion time of $J_j\in\jobs$. Similarly, $\w{s}_i$ and $\w{C}_i$ are start and completion times of $\w{J}_i\in\w{\jobs}$. We say that a task $J$ \emph{precedes} $J'$ in a given schedule if $J$ starts earlier than $J'$.

In the next three lemmas we prove several properties of every schedule for $\jobs\cup\w{\jobs}$. Then, in Lemma~\ref{lem:reduction} we prove that there exists a schedule for $\jobs\cup\w{\jobs}$ if and only if there exists a $3$-partition for $A$ and $B$.

\begin{lemma} \label{lem:preceeding}
In each schedule $D$ for $\jobs\cup\w{\jobs}$ we have that $\w{J_i}$ precedes $\w{J}_{i+1}$ for each $i=1,\ldots,m-1$.
\end{lemma}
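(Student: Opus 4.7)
The plan is to prove, by strong induction on $i \in \{1,\ldots,m\}$, the stronger assertion that for every feasible schedule $D$ of $\jobs\cup\w{\jobs}$ the first $i$ tasks from $\w{\jobs}$ appearing in $D$ are exactly $\w{J}_1,\w{J}_2,\ldots,\w{J}_i$, in that order. The lemma is then obtained by specialising to consecutive indices $i$ and $i+1$. The induction is driven by the chain
\[
\w{s}_i \;\leq\; \w{d}_i-\w{p}_i \;=\; l_i \;=\; (i-1)B^3+\frac{(i-1)i}{2}B \;<\; iB^3,
\]
where the first inequality is the deadline bound combined with the constant processing time $\w{p}_i\equiv B^3$, the middle equality comes from~(\ref{eq:intervals}), and the last strict inequality requires only $2B^2>i(i-1)$.

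Before starting the induction I would note that the condition $2B^2>m(m-1)$ (covering every $i\leq m$) may be assumed without loss of generality: multiplying all $a_j$ and $B$ in the source $3$-partition instance by any fixed polynomial factor preserves the ratio conditions $B/4<a_j<B/2$ and the identity $\sum_j a_j=mB$, as well as (strong) NP-hardness, so I may enter the reduction already with $B\geq m$, which gives $2B^2\geq 2m^2>m(m-1)$. The base case $i=1$ is then immediate from $l_1=0$, which forces $\w{s}_1=0$ and makes $\w{J}_1$ the very first task of $D$. For the inductive step, assume the first $i-1$ tasks from $\w{\jobs}$ in $D$ are $\w{J}_1,\ldots,\w{J}_{i-1}$, and suppose for contradiction that the $i$-th such task is some $\w{J}_j$ with $j\neq i$. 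Since indices $1,\ldots,i-1$ are already used, $j>i$; thus $\w{J}_j$ precedes $\w{J}_i$ in $D$, and the $i$ distinct tasks $\w{J}_1,\ldots,\w{J}_{i-1},\w{J}_j$ from $\w{\jobs}$, each of processing time exactly $B^3$, all complete before $\w{J}_i$ starts. Non-overlapping on the single machine forces $\w{s}_i\geq iB^3$, contradicting $\w{s}_i\leq l_i<iB^3$. Hence the $i$-th task of $\w{\jobs}$ in $D$ is $\w{J}_i$, closing the induction.

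The point most likely to demand care is precisely the inequality $l_i<iB^3$: without a lower bound on $B$ relative to $m$ it can genuinely fail, and the ``count $\w{\jobs}$-tasks in $B^3$-slots'' argument collapses. The cleanest fix is to absorb this into a one-time scaling of the $3$-partition input before the reduction begins, rather than to try to recover lost slack by bringing the processing times of the ordinary $\jobs$-tasks into the count; that alternative route is messier and gains little, since $J$-tasks inserted before $\w{J}_i$ can only enlarge $\w{s}_i$ in our favour but are not needed once the $B$-scaling is in hand.
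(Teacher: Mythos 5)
Your proof is correct and follows essentially the same route as the paper's: both derive a contradiction from the observation that if some $\w{J}_k$ with $k>i$ occupies the $i$-th slot among the $\w{\jobs}$-tasks, then $i$ disjoint blocks of length $B^3$ must precede the start of $\w{J}_i$, pushing it past its deadline, and both rest on the same arithmetic condition $2B^2>i(i-1)$. The one genuine addition on your side is that you explicitly justify the needed hypothesis $B>m$ by a polynomial rescaling of the $3$-partition instance, whereas the paper simply asserts ``for $i\leq m<B$'' without comment --- a worthwhile clarification, since $m<B$ is not automatic for arbitrary $3$-partition instances.
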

\begin{proof}
Suppose, for a contradiction, that the claim does not hold for $D$. Let $\w{\pi}_D$ be the permutation of tasks in $\w{\jobs}$ such that for each pair of tasks $\w{J}_i,\w{J}_{i'}\in\w{\jobs}$ we have $\w{\pi}_D^{-1}(\w{J}_i)<\w{\pi}_D^{-1}(\w{J}_{i'})$ if and only if $\pi_D^{-1}(\w{J}_i)<\pi_D^{-1}(\w{J}_{i'})$. In other words, to obtain $\w{\pi}_D$ we simply restrict $\pi_D$ to tasks in $\w{\jobs}$. Then, find the smallest index $i\in\{1,\ldots,m\}$ such that $\w{\pi}_D(i)\neq\w{J}_i$. Clearly, $\w{\pi}_D(i)=\w{J}_k$, $k>i$. We have
\[\w{C}_k\geq\w{p}_k(\w{s}_k)+\sum_{i'=1,\ldots,i-1}\w{p}_{i'}(\w{s}_{i'})=iB^3.\]
Since $\w{J}_i$ is executed in $D$ later than $\w{J}_k$, we have that
\[\w{C}_i\geq\w{C}_k+\w{p}_i(\w{s}_i)\geq(i+1)B^3>iB^3+\frac{i(i+1)}{2}B=\w{d}_i,\]
because $B^3>m^2B\geq Bi(i+1)/2$ for $i\leq m<B$. This gives the desired contradiction.
\end{proof}

Given a schedule $D$ for $\jobs\cup\w{\jobs}$, define $\w{I}_i=[\w{l}_i,\w{r}_i)=[\w{C}_i,\w{s}_{i+1})$ for $i=1,\ldots,m-1$ and let $\w{I}_m=[\w{C}_m,L)$. By Lemma~\ref{lem:preceeding}, this definition is valid and all the tasks in $\jobs$ have to be scheduled within $\bigcup_{i=1,\ldots,m}\w{I}_i$.

\begin{lemma} \label{lem:time_windows}
If $D$ is a schedule for $\jobs\cup\w{\jobs}$, then $\w{I}_i\subseteq I_i$ for each $i=1,\ldots,m$.
\end{lemma}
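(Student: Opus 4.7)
The plan is to verify the two endpoint inequalities $l_i\leq\w{l}_i$ and $\w{r}_i\leq r_i$ separately; together they yield $\w{I}_i=[\w{l}_i,\w{r}_i)\subseteq[l_i,r_i)=I_i$.

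For the upper bound I would use only the deadline of the \emph{next} long task. When $i<m$, by definition $\w{r}_i=\w{s}_{i+1}$, and since $\w{p}_{i+1}(t)=B^3$ and $\w{d}_{i+1}=l_{i+1}+B^3$, feasibility of $D$ forces
\[\w{s}_{i+1}\;\leq\;\w{d}_{i+1}-B^3\;=\;l_{i+1}\;=\;r_i.\]
For $i=m$, the equality $\w{r}_m=L=r_m$ is immediate from the definition of $L$ in~(\ref{eq:intervals}).

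For the lower bound $l_i\leq\w{l}_i=\w{C}_i$ I would appeal to Lemma~\ref{lem:preceeding} to conclude that all of $\w{J}_1,\ldots,\w{J}_i$ complete no later than $\w{C}_i$. Since the machine processes one task at a time and each $\w{J}_{i'}$ has execution time $B^3$, the total machine load accumulated by time $\w{C}_i$ is at least $iB^3$, so $\w{C}_i\geq iB^3$. The desired inequality $iB^3\geq l_i=(i-1)B^3+\tfrac{(i-1)i}{2}B$ then reduces to $B^2\geq\tfrac{(i-1)i}{2}$, which is guaranteed by the assumption $i\leq m<B$ already used in the proof of Lemma~\ref{lem:preceeding}. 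There is no real obstacle here; the whole argument combines a single deadline constraint on $\w{J}_{i+1}$ with a trivial makespan lower bound for $\w{J}_1,\ldots,\w{J}_i$.
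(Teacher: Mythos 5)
Your proposal is correct and matches the paper's own proof essentially step for step: the same split into the two endpoint inequalities, the same use of the deadline of $\w{J}_{i+1}$ to get $\w{r}_i\leq r_i$, and the same appeal to Lemma~\ref{lem:preceeding} to bound $\w{C}_i\geq iB^3\geq l_i$ via $B^2\geq\tfrac{(i-1)i}{2}$. Nothing further is needed.
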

\begin{proof}
By the definition, $\w{C}_i=\w{l}_i$, and, by Lemma~\ref{lem:preceeding},
\begin{equation} \label{eq:left_endpoint}
\w{C}_i\geq\sum_{1\leq i'\leq i}\w{p}_{i'}(\w{s}_{i'})=iB^3\geq(i-1)B^3+\frac{(i-1)i}{2}B=l_i,\quad i=1,\ldots,m.
\end{equation}
For the right endpoint of $\w{I}_i$, $i\in\{1,\ldots,m-1\}$, we have 
\begin{equation} \label{eq:right_endpoint}
\w{r}_i=\w{s}_{i+1}\leq\w{d}_{i+1}-\w{p}_{i+1}(\w{s}_{i+1})=l_{i+1}+B^3-B^3=l_{i+1}=r_i.
\end{equation}
Since $\w{r}_m=L=r_m$, by~(\ref{eq:left_endpoint}) and~(\ref{eq:right_endpoint}) we have that $\w{l}_i\geq l_i$ and $\w{r}_i\leq r_i$, which implies $\w{I}=[\w{l}_i,\w{r}_i)\subseteq I_i$ for each $i=1,\ldots.m$.
\end{proof}

\begin{lemma} \label{lem:time_windows_len}
If $D$ is a schedule for $\jobs\cup\w{\jobs}$, then $|\w{I}_i|=iB$ for each $i=1,\ldots,m$.
\end{lemma}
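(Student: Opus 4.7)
The plan is to sandwich each length $|\w{I}_i|$: upper bounds come from the deadlines $\w{d}_{j+1}$ of the $\w{J}_i$ tasks, lower bounds come from the requirement that the $\jobs$-tasks scheduled in $\w{I}_i$ must fit inside it. The identity $\sum_i |\w{I}_i| = Bm(m+1)/2$ will then close everything up.

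First I would pin down $\w{s}_1 = 0$: the deadline forces $\w{C}_1 \leq \w{d}_1 = l_1 + B^3 = B^3$, while $\w{C}_1 = \w{s}_1 + B^3 \geq B^3$, so $\w{s}_1 = 0$. Consequently $\sum_{i=1}^m |\w{I}_i| = L - mB^3 = Bm(m+1)/2$. Expanding $\w{C}_{j+1} = (j+1)B^3 + \sum_{i' \leq j} |\w{I}_{i'}|$ and applying $\w{C}_{j+1} \leq \w{d}_{j+1}$ for each $j = 1, \ldots, m-1$ gives the prefix bounds
\[
\sum_{i' \leq j} |\w{I}_{i'}| \leq \frac{j(j+1)}{2} B.
\]
For matching lower bounds, let $A_i$ denote the tasks of $\jobs$ whose start time lies in $\w{I}_i$ and set $x_i = \sum_{j \in A_i} a_j$. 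Since $\w{I}_i \subseteq I_i$ by Lemma~\ref{lem:time_windows}, each such task has execution time $i a_j$, and pairwise disjointness within $\w{I}_i$ yields $|\w{I}_i| \geq i x_i$. Because every task of $\jobs$ gets scheduled somewhere, $\sum_i x_i = mB$. Chaining the two estimates,
\[
\sum_{i' \leq j} i' x_{i'} \leq \frac{j(j+1)}{2} B \quad \text{for every } j \in \{1,\ldots,m\}.
\]

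The crux is an LP-style argument showing $x_i = B$ for every $i$ from these partial-sum inequalities together with $\sum x_i = mB$. Setting $y_i := x_i - B$ turns this into $\sum_i y_i = 0$ and $\sum_{i' \leq j} i' y_{i'} \leq 0$ for every $j$. I would then pass to the tail sums $z_i := \sum_{i' \geq i} y_{i'}$, so $z_1 = 0 = z_{m+1}$; a short Abel-summation calculation yields
\[
\sum_{i' \leq j} i' y_{i'} = z_1 + z_2 + \cdots + z_j - j z_{j+1},
\]
so the constraint at level $j$ becomes $z_{j+1} \geq \tfrac{1}{j}(z_1 + \cdots + z_j)$. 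Starting from $z_2 \geq 0$ (the case $j=1$), induction on $j$ gives $z_j \geq 0$ for all $j \geq 2$, and then the $j=m$ instance $\sum_{i=2}^m z_i \leq 0$ forces every $z_i$ to vanish. Hence $y_i \equiv 0$, i.e., $x_i = B$. Finally $|\w{I}_i| \geq iB$ for each $i$ while $\sum_i |\w{I}_i| = Bm(m+1)/2 = \sum_i iB$, so all inequalities are tight and $|\w{I}_i| = iB$. The main obstacle is precisely this LP step: the partial-sum inequalities look underdetermined in the $y_i$, and it is the tail-sum substitution that turns them into a one-line induction.
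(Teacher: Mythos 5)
Your proof is correct, and it takes a genuinely different route from the paper's. The paper argues by contradiction via an exchange argument: it refines each $J_j$ into $a_j$ unit tasks, locates the first interval with $|\w{I}_i|\neq iB$ (showing $|\w{I}_i|>iB$ is impossible because $\w{J}_{i+1}$ would miss its deadline, so $|\w{I}_i|<iB$), and then builds a sequence of schedules $D_0,D_1,\ldots,D_q$, each obtained by pulling a $\jobs$-task leftward into $\w{I}_i$ and shifting some $\w{\jobs}$-tasks right, with strictly decreasing makespan; since $D_q$ has makespan exactly $L$, the original $D$ has $\ms(D)>L$, a contradiction. You instead work directly: the deadline of $\w{J}_1$ pins $\w{s}_1=0$ and hence $\sum_i|\w{I}_i|=L-mB^3=Bm(m+1)/2$; the deadlines of $\w{J}_2,\ldots,\w{J}_m$ give the prefix bounds $\sum_{i'\leq j}|\w{I}_{i'}|\leq \frac{j(j+1)}{2}B$; Lemma~\ref{lem:time_windows} gives the per-interval lower bounds $|\w{I}_i|\geq i x_i$ with $\sum_i x_i=mB$; and the Abel-summation argument on the tail sums $z_i$ correctly forces $x_i=B$, after which tightness of $\sum_i|\w{I}_i|=\sum_i iB$ finishes the job. (I checked the small points: a $\jobs$-task starting in $\w{I}_i$ must also complete within $\w{I}_i$ since it cannot overlap $\w{J}_{i+1}$, so $|\w{I}_i|\geq ix_i$ is legitimate, and the constraint at level $j=m$ that you need for the final step does hold, via the total-length identity rather than a deadline.) Your approach buys a shorter, more transparent proof that avoids both the task-splitting construction and the delicate bookkeeping of the iterative schedule modifications; the paper's version is closer in spirit to a local-improvement argument and reuses the makespan machinery, but is harder to verify. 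Either would serve; if you write yours up, state explicitly that every $\jobs$-task starts (and completes) inside exactly one $\w{I}_i$, since that partition underlies both the identity $\sum_i x_i=mB$ and the disjointness bound.
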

\begin{proof}
We assume, for a contradiction, that the thesis does not hold for $D$.
We create a new set of tasks corresponding to $\jobs$, namely $J_j^1,\ldots,J_j^{a_j}$ are $a_j$ tasks corresponding to $J_j\in\jobs$. The set of all tasks $J_j^l$ is denoted by $\jobs'$. Note that $|\jobs|=mB$. For each $J_j^l\in\jobs'$ we define the deadline to be the same as for $J_j$, while the execution time is $p_j^l(t)=i$, where $t\in I_i$, $l=1,\ldots,a_j$. Consider a schedule $D_0$ for $\jobs'\cup\w{\jobs}$ obtained from $D$ in such a way that each task $J_j\in\jobs$ is replaces by the sequence $J_j^1,\ldots,J_j^{a_j}$. We have that a task $J_j$ executes within $\w{I}_i$ for some $i\in\{1,\ldots,m\}$, and by Lemma~\ref{lem:time_windows} $\w{I}_i\subseteq I_i$, which means that its execution time is $ia_i$. Also by Lemma~\ref{lem:time_windows} we have that the sum of execution times of $J_j^1,\ldots,J_j^{a_j}$ is $\sum_{l=1,\ldots,a_i}i=ia_i$. This in particular means that $\ms(D)=\ms(D_0)$ and all tasks in $\w{J}$ are executed in the same time intervals in both schedules.

Now we will perform a sequence of modifications of the schedule $D_0$, obtaining a sequence of schedules $D_1,D_2,\ldots,D_q$ for the set of tasks $\jobs'\cup\w{\jobs}$. We describe the first modification leading us from $D_0$ to $D_1$ and the migration from $D_p$ to $D_{p+1}$ is analogous for each $p$, $0<p<q$. In the remaining part of this proof we use symbols $\w{s}_i(D_p)$, $\w{C}_i(D_p)$, $s_i(D_p)$, $C_i(D_p)$ to distinguish the parameters of tasks which depend on a schedule $D_p$, $p\geq 0$. Consequently we write $\w{I}_i(D_p)$ since the endpoints depend on the execution time of $\w{J}_i$'s. For a task $J_j^l\in\jobs'$ its start and completion time in a schedule $D_p$ is $s_j^l(D_p)$ and $C_j^l(D_p)$, respectively. Find in $D_0$ the interval $\w{I}_i(D_0)$ such that $|\w{I}_i(D_0)|\neq iB$ and $|\w{I}_{i'}(D_0)|=i'B$ for each $i'=1,\ldots,i-1$. Such $\w{I}_i(D_0)$ does exist since we assumed for a contradiction that the thesis does not hold. Moreover, $i<m$.

If $|\w{I}_i(D_0)|>iB$ then we have that $\w{J}_{i+1}$ starts at
\begin{multline}
\w{s}_{i+1}(D_0)=iB^3+|\w{I}_i(D_0)|+\sum_{i'=1,\ldots,i-1}i'B \nonumber \\
=iB^3+|\w{I}_i(D_0)|-iB+\sum_{i'=1,\ldots,i}i'B=l_{i+1}+|\w{I}_i(D_0)|-iB. \nonumber
\end{multline}
This, however, means that $\w{J}_{i+1}$ does not finish before its deadline, $\w{C}_{i+1}(D_0)=\w{s}_{i+1}(D_0)+B^3>l_{i+1}+B^3=\w{d}_{i+1}$. So, $|\w{I}_i(D_0)|<iB$.

To obtain $D_1$, let initially $D_1=D_0$ and we apply the following modifications to $D_1$. Find in $D_1$ the task $J_j^l\in\jobs'$ which executes first in the interval $[\w{r}_i(D_1),L]$. Then, let $s_j^l(D_1)=\w{r}_i(D_1)$. Note that only tasks in $\w{\jobs}$ are executed in the interval $[\w{r}_i(D_0),s_j^l(D_0)]$. To make the schedule $D_1$ feasible, shift $i$ units to the right all tasks in $\w{\jobs}$ which are executed in $[\w{r}_i(D_0),s_j^l(D_0)]$. In the new schedule $D_1$ no two tasks overlap, because by the definition and by Lemma~\ref{lem:time_windows} the execution time of $J_j^l$ in $D_0$ is at least $(i+1)B$, while its execution time in $D_1$ is $iB$. To prove that the schedule is feasible after shifting the tasks it is enough to argue that the task $\w{J}_{i+1}$ succeeding $J_j^l$ in $D_1$ finishes before its deadline. To prove it observe that for each $i'<i$, $|\w{I}_{i'}|=i'B$ which implies that
\[\w{s}_i(D_1)=\sum_{i'=1,\ldots,i-1}(B^3+i'B)=(i-1)B^3+\frac{(i-1)i}{2}B=l_i,\]
which means that $\w{C}_i(D_1)=\w{s}_i(D_1)+B^3=l_i+B^3$, and
\[\w{C}_{i+1}(D_1)=\w{C}_i(D_1)+|\w{I}_i(D_1)|+B^3=l_i+|\w{I}_i(D_1)|+2B^3\leq l_{i+1}+B^3=\w{d}_{i+1},\]
because $|\w{I}_i(D_1)|\leq iB$. If more tasks in $\w{\jobs}$ have been shifted while computing $D_1$, then they also finish before their deadlines, because they are executed consecutively, following $\w{J}_{i+1}$. Note that there is now an idle time in $D_1$, because $s_l^j(D_1)\in I_i$ and $s_l^j(D_0)>r_i$, which by Lemma~\ref{lem:time_windows} means that the execution time of $J_j^l$ is strictly bigger in $D_0$ than in $D_1$. (Assume that the difference in execution times is $x>0$.) So, each task which succeeds $J_j^l$ in $D_0$ is executed in $D_1$ at least $x$ time units earlier, because the execution time of each task does not increase when the execution starts earlier. Consequently, $\ms(D_0)>\ms(D_1)$. Similarly, we obtain that $\ms(D_i)>\ms(D_{i+1})$ for each $i=1,\ldots,q-1$.

The schedule $D_q$ has the property that each interval $\w{I}_i(D_q)$, $i=1,\ldots,m$, is of length $iB$. So, the makespan of $D_q$ is $\ms(D_q)=mB^3+\sum_{i=1,\ldots,m}iB=mB^3+\frac{m(m+1)}{2}B=L$. Thus,
\[\ms(D)=\ms(D_0)>\ms(D_1)>\cdots>\ms(D_q)=L.\]
In particular we obtain that the makespan of $D$ exceeds $L$, while the deadline of each task in $\jobs\cup\w{\jobs}$ is at most $L$ -- a contradiction.
\end{proof}

\begin{lemma} \label{lem:reduction}
There exists a schedule for $\jobs\cup\w{\jobs}$ if and only if there exists a $3$-partition for $A$ and $B$.
\end{lemma}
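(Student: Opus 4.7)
The plan is to use the structural facts already proved, namely Lemma~\ref{lem:preceeding} (the $\w{J}_i$'s appear in order), Lemma~\ref{lem:time_windows} ($\w{I}_i\subseteq I_i$) and Lemma~\ref{lem:time_windows_len} ($|\w{I}_i|=iB$), so that both directions of the equivalence become essentially accounting arguments.

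For the ``if'' direction I would suppose that a $3$-partition $A_1,\ldots,A_m$ of $A$ into triples of sum $B$ is given, and I would exhibit an explicit feasible schedule $D$. Concretely, I would place $\w{J}_i$ in the interval $[l_i,l_i+B^3)$ for $i=1,\ldots,m$, and then place the tasks $J_j$ with $a_j\in A_i$ consecutively in the interval $[l_i+B^3,r_i)$, in any order. Because this block is contained in $I_i$, each such $J_j$ has execution time exactly $ia_j$, so the total time the tasks of $A_i$ occupy is $\sum_{a_j\in A_i}ia_j=iB$, which matches the length $r_i-(l_i+B^3)=iB$ of the available slot. A direct computation then gives $\w{C}_i=l_i+B^3=\w{d}_i$ and $C_j\leq L=d_j$, so $D$ is feasible.

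For the ``only if'' direction I would start from an arbitrary feasible schedule $D$ for $\jobs\cup\w{\jobs}$. By Lemma~\ref{lem:preceeding}, the intervals $\w{I}_1,\ldots,\w{I}_m$ are well defined and contain all tasks of $\jobs$; by Lemma~\ref{lem:time_windows}, $\w{I}_i\subseteq I_i$; and by Lemma~\ref{lem:time_windows_len}, $|\w{I}_i|=iB$. Since tasks do not overlap and $\w{I}_i=[\w{C}_i,\w{s}_{i+1})$ is bounded by $\w{J}_i$ and $\w{J}_{i+1}$, each $J_j\in\jobs$ is executed entirely inside a single $\w{I}_{f(j)}$. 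Define $A_i=\{a_j:f(j)=i\}$. Because $\w{I}_i\subseteq I_i$, every $J_j$ with $a_j\in A_i$ has execution time exactly $ia_j$, and since these tasks fit inside $\w{I}_i$ we obtain
\begin{equation*}
\sum_{a_j\in A_i}ia_j\leq |\w{I}_i|=iB,\qquad\text{hence}\qquad\sum_{a_j\in A_i}a_j\leq B.
\end{equation*}
Summing over $i$ and using $\sum_{j=1}^{3m}a_j=mB$, every one of these $m$ inequalities must be tight, so $\sum_{a_j\in A_i}a_j=B$ for each $i$. The assumption $B/4<a_j<B/2$ then forces $|A_i|=3$, and the $A_i$'s form the desired $3$-partition.

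The only nontrivial step is identifying for each $J_j$ the unique $\w{I}_i$ in which it lives; this uses the non-overlap of tasks together with Lemma~\ref{lem:preceeding}, and it is the place where one has to be careful because a priori a task could straddle the slot of some $\w{J}_{i+1}$. Once this is settled, the rest is the tight capacity argument above and the standard $3$-partition cardinality observation.
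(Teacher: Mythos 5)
Your proof is correct and follows essentially the same route as the paper: the forward direction builds the same block schedule $\w{J}_1,\jobs_1,\ldots,\w{J}_m,\jobs_m$ filling each $I_i$ exactly, and the backward direction uses Lemmas~\ref{lem:preceeding}--\ref{lem:time_windows_len} to confine the $\jobs$-tasks to the windows $\w{I}_i$ and read off the partition. Your explicit tightness step (summing $\sum_{a_j\in A_i}a_j\leq B$ over $i$ against $\sum_j a_j=mB$) merely spells out an argument the paper leaves implicit.
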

\begin{proof}
Let $A_1,\ldots,A_m$ be a $3$-partition of $A$. For brevity let $\jobs_i=\{J_j\in\jobs\colon a_j\in A_i\}$. Create a schedule $D$ in such a way that \[\pi_D=(\w{J}_1,\jobs_1,\ldots,\w{J}_i,\jobs_i,\ldots,\w{J}_m,\jobs_m).\]
We use induction on $i$ to prove that the tasks in $\{\w{J}_i\}\cup\jobs_i$ are executed in time interval $I_i$. The case when $i=1$ and $i>1$ are analogous, so assume that all the tasks in $\bigcup_{1\leq i'\leq i}(\{\w{J}_{i'}\}\cup\jobs_{i'})$ are executed within $I_1\cup\cdots\cup I_i=[0,r_i]$ for some $1\leq i<m$. For $\w{J}_{i+1}\cup\jobs_{i+1}$ we have that $\w{J}_{i+1}$ is scheduled first and its execution time is $B^3$. Then, the tasks in $\jobs_{i+1}$ follow in any order. Moreover, for each $t\in I_{i+1}$ we obtain $\sum_{J_j\in\jobs_{i+1}}p_j(t)=(i+1)\sum_{a_j\in A_{i+1}}a_j=(i+1)B$, because $A_{i+1}$ is a part of the solution to the $3$-partition problem. Thus, by~(\ref{eq:intervals}), the tasks in $\{\w{J}_{i+1}\}\cup\jobs_{i+1}$ can be executed within $[r_i,r_i+B^3+(i+1)B]=[l_{i+1},l_{i+1}+B^3+(i+1)B]=I_i$.

Let $D$ be a schedule for $\jobs\cup\w{\jobs}$. By Lemma~\ref{lem:time_windows_len}, $|\w{I}_i|=iB$ for each $i=1,\ldots,m$. Let $i\in\{1,\ldots,m\}$. Since, by the definition of $\w{I}_i$'s the tasks executed within $\w{I}_i$ belong to $\jobs$ and, by Lemma~\ref{lem:time_windows}, executing $J_j$ in $\w{I}_i$ takes $ia_j$ time. Thus, for the jobs $\jobs_i\subseteq\jobs$ executed within $\w{I}_i$ we have that their total running time is $iB$, i.e. $\sum_{J_j\in\jobs_i}ia_j=iB$. So, $A_i=\{a_j:J_j\in\jobs_i\}$, $i=1,\ldots,m$, is a solution to the $3$-partition problem.
\end{proof}

\begin{theorem} \label{thm:tds_hard}
Given a set of tasks $\jobs$ with integer deadlines and integer nondecreasing (in time) execution times, the problem of deciding if there exists a feasible schedule for $\jobs$ is strongly \textup{NP}-complete.
\qed
\end{theorem}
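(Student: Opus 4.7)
The plan is to combine the construction and lemmas developed throughout this subsection into a single NP-completeness argument, using as source problem the strongly NP-complete $3$-partition problem.

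First I would observe that $\tds$ is in NP: a schedule is completely determined by a permutation $\pi_D$ of the tasks, and given $\pi_D$ one can compute the start and completion times successively from left to right (using the piecewise description of $p_j(t)$) and compare each $C_j$ against $d_j$ in polynomial time. Because all of the execution time values appearing in the constructed instance are of the form $ia_j$ or $B^3$, all $s_j$ and $C_j$ are integers bounded by $L$, so the verification is done with polynomially many arithmetic operations on polynomially bounded integers.

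Next I would invoke the reduction constructed in this subsection. Starting from an instance $(A,B)$ of $3$-partition, where by strong NP-completeness of $3$-partition we may assume $B$ is bounded by a polynomial in $|A|=3m$, produce the task set $\jobs\cup\w{\jobs}$ with intervals $I_i$ as in~(\ref{eq:intervals}). The quantities $L=mB^3+Bm(m+1)/2$, all $l_i$, $r_i$, all deadlines $d_j=L$ and $\w{d}_i=l_i+B^3$, and all execution times $p_j(t)=ia_j$, $\w{p}_i(t)=B^3$ are integers and are polynomially bounded in $m$ and $B$, hence polynomially bounded in the size of the input instance. Moreover $p_j(t)$ is nondecreasing in $t$, since $p_j$ is constant on each $I_i$ and jumps from $ia_j$ to $(i+1)a_j$ between $I_i$ and $I_{i+1}$; $\w{p}_i$ is constant. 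The construction is clearly computable in polynomial time.

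Correctness is already done: Lemma~\ref{lem:reduction} establishes that a feasible schedule for $\jobs\cup\w{\jobs}$ exists if and only if $(A,B)$ admits a $3$-partition. Since $3$-partition is strongly NP-complete and our constructed $\tds$ instance uses only integer data polynomially bounded in the size of the input, strong NP-completeness of $\tds$ follows. The only place where one has to be a bit careful is the ``polynomially bounded values'' claim, which is the whole point of strong NP-completeness and the reason the $B^3$ gap in the intervals was introduced in the construction: one must check that despite the cubic blow-up, $L$ is still polynomial in the input size because in strong $3$-partition $B$ itself is polynomial in $m$. This is the only nontrivial bookkeeping step; the rest of the proof is a direct appeal to Lemma~\ref{lem:reduction} together with the obvious membership in NP.
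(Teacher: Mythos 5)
Your proof is correct and follows essentially the same route as the paper: the theorem is stated there with no proof body precisely because it is an immediate consequence of the construction of $\jobs\cup\w{\jobs}$ and Lemma~\ref{lem:reduction}, together with the points you make explicit (membership in NP via the permutation representation of a schedule, nondecreasingness of the $p_j$, and polynomial boundedness of $L=mB^3+Bm(m+1)/2$ because $B$ may be assumed polynomial in $m$ by strong NP-completeness of $3$-partition). If anything, your write-up is more complete than the paper's, since it records the NP-membership and value-boundedness bookkeeping that the paper leaves implicit.
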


\subsection{Reducing $\tds$ to $\csp$}
\label{subsec:reduction}

In this subsection we prove NP-hardness of $\csp$ problem. We start by reducing $\tds$ to $\csfp$, then we conclude that $\csp$ is NP-complete as well.

The instance of $\tds$ consists of a set of tasks $\jobs$, where each task $J_j\in\jobs$ has its integer deadline $d_j$ and a nondecreasing function $p_j\colon\{0,\ldots,d_j-1\}\to\nat_+$ describing the execution time. As argumented in the previous section, the integer valued functions $p_j$ imply that in each schedule $s_j$ and $C_j$ are integers, $J_j\in\jobs$, which also justifies that we may consider the values of $p_j$ only at integer points. For each $J_j\in\jobs$ let $f_j$ be the latest possible integer starting point for $J_j$, i.e. $f_j=\max\{t\in\nat\colon t+p_j(t)\leq d_j\}$. The integer $L$ is selected to be an upper bound for the length of each feasible schedule,
\begin{equation} \label{eq:L_def}
L=\max\{d_j\colon J_j\in\jobs\}.
\end{equation}

Given $\jobs$, we create a node-weighted tree $T=(V,E,w)$ rooted at $r$.
For each $J_j\in\jobs$ create a path $P_j$ with
\[V(P_j)=\{u_j^i,v_j^i\colon i=0,\ldots,f_j\},\]
\[E(P_j)=\{u_j^iv_j^i\colon i=0,\ldots,f_j\}\cup\{v_j^iu_j^{i+1}\colon i=0,\ldots,f_j-1\}.\]

The tree $T$, in addition to the vertices in $\bigcup_{J_j\in\jobs}V(P_j)$, contains the vertices $r$ and $y_j,z_j$, $j=0,\ldots,|\jobs|$.
The root $r$ is adjacent to $y_0$ and to the endpoint $u_j^{f_j}$ of each path $P_j$, $j=1,\ldots,|\jobs|$. The other endpoint of $P_j$, namely the vertex $v_j^{0}$, is adjacent to $y_j$ for each $j=1,\ldots,|\jobs|$. Finally, for each $j=1,\ldots,|\jobs|$ the vertex $y_j$ is the father of $z_j$.

The weight function $w\colon V(G)\to\nat_+$ is as follows
\begin{equation} \label{eq:weght_def1}
w(r)=2L,
\end{equation}
\begin{equation} \label{eq:weght_def3}
w(y_j)=3L,w(z_j)=1\quad j=0,\ldots,|\jobs|,
\end{equation}
\begin{equation} \label{eq:weght_def2}
w(u_j^{i})=2L-i\textup{ and }w(v_j^{i})=p_j(i)
\end{equation}
for each $j=1,\ldots,|\jobs|$, $i=0,\ldots,f_j$. Finally, let $k=4L$ be the number of available searchers.
Note that for each $u_j^i$ and $v_j^{i'}$, $0\leq i,i'\leq f_j$, it holds
\begin{equation} \label{eq:blocking}
w(u_j^i)>L\geq w(v_j^{i'}),
\end{equation}
because $f_j<L$ for each $j=1,\ldots,|\jobs|$. Other simple facts that will be useful in the following are
\begin{equation} \label{eq:u_increases}
w(u_j^0)>w(u_j^1)>\cdots>w(u_j^{f_j}),\quad j=1,\ldots,|\jobs|,
\end{equation}
\begin{equation} \label{eq:v_decreases}
w(v_j^{f_j})>w(v_j^{f_j-1})>\cdots>w(v_j^0),\quad j=1,\ldots,|\jobs|.
\end{equation}

We start by describing a search strategy $\strategy$ for $T_r$, assuming that a schedule $D$ for $\jobs$ is given:
\begin{list}{}{}
\item[Step 1:] Initially $2L$ searchers occupy $r$.
\item[Step 2:] For each $i=1,\ldots,|\jobs|$ do the following: let $J_j=\pi_D(i)$; clear the path $P_j(D)\subseteq P_j$ containing vertices $u_j^{f_j}$, $v_j^{f_j},\ldots,u_j^{s_j}$, $v_j^{s_j}$. (After this step, by~(\ref{eq:weght_def2}), $w(v_j^{s_j})=p_j(s_j)$ searchers occupy $v_j^{s_j}$ to guard it.)
\item[Step 3:] Clear the vertices $y_0$ and $z_0$.
\item[Step 4:] For each $J_j\in\jobs$ clear the path $u_j^{s_j-1}$, $v_j^{s_j-1},\ldots,u_j^{0}$, $v_j^{0}$, $y_j$, $z_j$ (after this step the subtree rooted at $v_j^{f_j}$ is clear).
\end{list}

\begin{lemma} \label{lem:C_is_a_search}
$\strategy$ is a connected search for $T$. Moreover, $\sn(\strategy)\leq k$.
\end{lemma}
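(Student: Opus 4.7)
The plan is to verify, by induction on the moves of $\strategy$, that (i) the clear subgraph is connected after every move and (ii) $w(\border(\strategy[i]))\leq 4L$ for every $i$. Part~(i) is immediate from the description: in Step~2 each subpath $P_j(D)$ is grown downward starting at $r$, in Step~3 the edges $ry_0$ and $y_0z_0$ extend the clear subgraph through $r$, and in Step~4 each $P_j$ is continued downward from the already-guarded vertex $v_j^{s_j}$ all the way to $z_j$; thus the real work is the searcher-count bound.

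For Step~2 I would track the invariant that immediately after processing $J_j=\pi_D(i_0)$ the guarded set is $\{r\}\cup\{v_{\pi_D(i')}^{s_{\pi_D(i')}}\colon i'\leq i_0\}$, of total weight $2L+C_j$; the equality $C_j=\sum_{i'\leq i_0}p_{\pi_D(i')}(s_{\pi_D(i')})$ uses that $D$ is idle-free, so the completion time of $J_j$ equals the sum of execution times of all preceding tasks plus its own. Within the clearing of $P_j(D)$ the ``current'' guarded vertex on the path alternates between some $u_j^i$ of weight $2L-i$ and some $v_j^i$ of weight $p_j(i)$, while the other guards still contribute exactly $2L+s_j$. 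The two possible peaks are therefore $2L+s_j+(2L-i)=4L-(i-s_j)\leq 4L$ (because $i\geq s_j$ on $P_j(D)$) and $2L+s_j+p_j(i)$; for the latter I would use the definition $f_j=\max\{t\colon t+p_j(t)\leq d_j\}$ together with monotonicity of $p_j$ to obtain $p_j(i)\leq d_j-i\leq L-s_j$ for $i\in[s_j,f_j]$, bounding it by $3L$. When Step~2 ends the invariant yields $w(\border)=2L+\ms(D)\leq 3L$ by~(\ref{eq:L_def}).

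For the final two steps, the slide in Step~3 needs $w(y_0)=3L$ searchers at $r$; since $r$ already carries $2L$ guards and there are $2L-\ms(D)\geq L$ free searchers that can be gathered to $r$ along cleared edges, this move is feasible, and after it $r$ leaves $\border$ (its last contaminated edge is now clear), so $w(\border)=3L+\ms(D)\leq 4L$; clearing $y_0z_0$ then drops $\border$ to $\{v_j^{s_j}\colon J_j\in\jobs\}$. Step~4 mirrors Step~2 in reverse: while processing $J_j$, the ``background'' contribution $B=\sum_{j'\text{ still pending}}p_{j'}(s_{j'})$ is at most $\ms(D)\leq L$, so the peak guarding on the current path, attained either at some $u_j^i$ (giving $(2L-i)+B\leq 3L$) or at $y_j$ (giving $3L+B\leq 4L$), never exceeds $4L$, and the slide of $3L$ searchers from $v_j^0$ to $y_j$ is feasible by the same free-searcher argument as in Step~3. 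The main obstacle is just keeping this bookkeeping clean; the only genuinely delicate inequality is $p_j(i)\leq L-s_j$ for $i\in[s_j,f_j]$, which is precisely what the choice of vertex weights $w(u_j^i)=2L-i$ and $w(v_j^i)=p_j(i)$ is engineered to exploit.
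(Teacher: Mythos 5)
Your proof is correct and follows essentially the same route as the paper's: induction over the tasks of Step~2 with the guard-weight invariant $2L+\sum_{j'}p_{j'}(s_{j'})$, the idle-free identity $s_j=\sum_{j'\prec j}p_{j'}(s_{j'})$ making the peak at $u_j^{s_j}$ exactly $4L$, and the bound $\sum_j p_j(s_j)=\ms(D)\leq L$ leaving $3L$ searchers for $y_0$, $z_0$ and the remaining subpaths. Your explicit bound $p_j(i)\leq L-s_j$ for the $v$-peaks is fine but not needed, since inequality~(\ref{eq:blocking}) already makes every $v$-peak dominated by the corresponding $u$-peak.
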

\begin{proof}
It is easy to see that after each step the subtree that is clear is connected. Now we prove that the number of searchers used is at most $k$. Initially $2L$ searchers guard $r$. We prove by induction on $j=1,\ldots,|\jobs|$ that $k$ searchers suffice to clear the path $P_j(D)$ in Step~2 and the number of searchers used in $\strategy$ for guarding when the vertex $v_j^{s_j}$ becomes clear is
\begin{equation} \label{eq:induction_s}
x_j=2L+\sum_{j'\colon\pi_D^{-1}(P_{j'})\leq\pi_D^{-1}(P_j)}p_{j'}(s_{j'}).
\end{equation}

The cases when $j=1$ and $j>1$ are analogous ($x_0=2L$), so we prove it for $j$, assuming that it is true for $j-1$, $1\leq j<|\jobs|$.

By~(\ref{eq:blocking}) and~(\ref{eq:u_increases}) we obtain
\[w(u_j^{s_j})=\max\{w(v):v\in V(P_j(D))\}.\]
So, by~(\ref{eq:induction_s}), $w(u_j^{s_j})+x_j$ searchers are needed to clear $P_j(D)$. We have
\[w(u_j^{s_j})+x_j=(2L-s_j)+2L+\sum_{j'\colon\pi_D^{-1}(P_{j'})<\pi_D^{-1}(P_j)}p_{j'}(s_{j'})=4L,\]
because, by the definition of a schedule for time-dependent tasks the execution of a task $J_j$ starts immediately after the execution of the preceding task ends, which can be stated as
\[s_j=\sum_{j'\colon\pi_D^{-1}(P_{j'})<\pi_D^{-1}(P_j)}p_{j'}(s_{j'}).\]
This proves that $4L$ searchers are used in the first two steps of the algorithm. When the execution of the second step is completed, $2L$ searchers are used for guarding $r$, while for guarding the vertices $v_j^{s_j}$, $j=1,\ldots,|\jobs|$ we need
\begin{equation} \label{eq:guarding_L}
\sum_{j=1,\ldots,|\jobs|}w(v_j^{s_j})=\sum_{j=1,\ldots,|\jobs|}p_j(s_j)\leq L
\end{equation}
searchers. The last inequality follows from Equation~(\ref{eq:L_def}) and from the fact that in a valid schedule $D$ each task is completed within interval $[0,L]$. Thus, we can use $3L$ searchers to clear $y_0$, $z_0$ and then the remaining subpaths $u_j^{s_j-1}$, $v_j^{s_j-1},\ldots,u_j^{0}$, $v_j^{0}$, $y_j$, $z_j$.
\end{proof}

\begin{corollary} \label{cor:scheduling_to_search}
If there exists a valid schedule for $\jobs$, then there exists a connected $4L$-search strategy for the weighted tree $T$ rooted at $r$.
\end{corollary}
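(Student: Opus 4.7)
The plan is essentially to observe that this corollary is an immediate packaging of Lemma~\ref{lem:C_is_a_search}. Given a valid schedule $D$ for $\jobs$, I would apply the explicit construction presented just before Lemma~\ref{lem:C_is_a_search} to $D$ in order to obtain the search strategy $\strategy$ described in Steps 1--4. Since that construction only uses information from $D$ (the permutation $\pi_D$ and the starting times $s_j$), no additional work is required to produce $\strategy$.

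Next, I would invoke Lemma~\ref{lem:C_is_a_search} to conclude two facts at once: that $\strategy$ is a connected search strategy for $T$ rooted at $r$, and that $\sn(\strategy)\leq k$. By the construction of the instance we have $k=4L$, so $\sn(\strategy)\leq 4L$. Hence $\strategy$ witnesses the existence of a connected $4L$-search strategy for $T$, which is exactly the statement of the corollary.

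Since the bulk of the verification (checking connectivity after each step, bounding the number of searchers used in Steps 2--4 via the inductive identity~(\ref{eq:induction_s}), and the guarding bound~(\ref{eq:guarding_L})) was already carried out in the proof of Lemma~\ref{lem:C_is_a_search}, there is no genuine obstacle here. The only thing to make explicit is the equality $k=4L$ coming from the instance definition, so that the bound $\sn(\strategy)\leq k$ provided by the lemma is precisely the desired bound $\sn(\strategy)\leq 4L$.
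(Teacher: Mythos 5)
Your proposal is correct and matches the paper exactly: the corollary is stated without a separate proof precisely because it is the immediate packaging of Lemma~\ref{lem:C_is_a_search} (construct $\strategy$ from the schedule $D$, then use the lemma's conclusions that $\strategy$ is connected and $\sn(\strategy)\leq k=4L$). Nothing further is needed.
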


Now we prove the reverse implication, i.e. that the existence of a search strategy for $T_r$ gives a valid schedule for $\jobs$. We start with a technical lemma.
\begin{lemma} \label{lem:ry_0_cleared_last}
In each $4L$-search strategy $\strategy$ for $T_r$, $ry_0$ is the edge that is cleared last among the edges in $E_r$.
\end{lemma}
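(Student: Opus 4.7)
The plan is to pinpoint the moment move $\strategy[i^*]$ that clears $ry_0$ and count guards at the end of that move. The claim will follow because the guards forced at $r$ and at $y_0$ would together exceed $k=4L$ if any other edge of $E_r$ were still contaminated.

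First I would observe that $E_r$ consists of $ry_0$ together with the edges $ru_j^{f_j}$ for $j=1,\ldots,|\jobs|$. Since $\strategy$ starts at $r$ and the only neighbours of $y_0$ in $T$ are $r$ and $z_0$, the vertex $y_0$ is reached for the first time precisely when $ry_0$ is cleared; in particular $y_0z_0$ has not yet been cleared at the end of $\strategy[i^*]$. Because $z_0$ is a contaminated neighbour of $y_0$ and $w(y_0)=3L$, we get $y_0\in\border(\strategy[i^*])$ and this vertex alone contributes $w(y_0)=3L$ to $|\strategy[i^*]|$.

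Next I would assume for contradiction that some edge $ru_{j}^{f_{j}}\in E_r$ is still contaminated at the end of $\strategy[i^*]$. Then $r\in\border(\strategy[i^*])$, and by (\ref{eq:weght_def1}) it contributes $w(r)=2L$ to $|\strategy[i^*]|$. Summing the contributions from $r$ and $y_0$ yields
\[
|\strategy[i^*]|\;\geq\;w(r)+w(y_0)\;=\;2L+3L\;=\;5L\;>\;4L\;=\;\sn(\strategy),
\]
which contradicts the fact that $\strategy$ is a $4L$-search. Hence at the moment $ry_0$ is cleared, every other edge of $E_r$ has already been cleared, which is exactly the statement of the lemma.

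There is essentially no obstacle beyond being careful that $y_0z_0$ is genuinely still contaminated when $ry_0$ is cleared; this is forced by connectedness and the fact that the strategy starts at $r$, so all other ingredients (the weights $w(r)=2L$, $w(y_0)=3L$, and the global budget $k=4L$) combine in the obvious arithmetic way.
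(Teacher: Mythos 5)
Your proof is correct and follows essentially the same route as the paper's: both arguments count the guards forced at $r$ (weight $2L$, since some $ru_j^{f_j}$ is still contaminated) and at $y_0$ (weight $3L$, since $y_0z_0$ is not yet clear) at the moment $ry_0$ is cleared, obtaining $5L>4L$. You are in fact slightly more explicit than the paper in justifying why $y_0$ must be guarded at that moment.
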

\begin{proof}
Let $\strategy[i]$ be the move of clearing $ry_0$. If at least one edge in $E_r\setminus\{ry_0\}$ is contaminated during clearing $ry_0$, the vertex $r$ has to be guarded while clearing $ry_0$. That would imply $|\strategy[i]|=w(r)+w(y_0)=5L$ --- a contradiction.
\end{proof}

\begin{lemma} \label{lem:search_to_scheduling}
If there exists a connected $4L$-search strategy $\strategy$ for the weighted tree $T_r$, then there exists a valid schedule for $\jobs$.
\end{lemma}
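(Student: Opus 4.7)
The plan is to read off a feasible schedule for $\jobs$ from $\strategy$ by taking the start time of each task $J_j$ from the position where the frontier of $\strategy$ rests along $P_j$. First, by Lemma~\ref{lem:ry_0_cleared_last}, $ry_0$ is cleared last among $E_r$, so every path $P_j$ has been entered (its edge $ru_j^{f_j}$ cleared) before $ry_0$ is cleared. No $P_j$-subtree can be fully cleared by that point either, because clearing $v_j^0y_j$ would require $w(y_j)=3L$ searchers on top of $w(r)=2L$ guarding $r$, exceeding the $4L$ budget. Since each $P_j$-subtree is a simple path, at every step before $ry_0$ is cleared each entered $P_j$ has a single frontier vertex; let $c_j$ denote this frontier in the configuration just before the move that clears $ry_0$. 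That move slides $w(y_0)=3L$ searchers to $y_0$ and afterwards $r$ has no contaminated incident edge, so the end-of-move budget gives $3L+\sum_j w(c_j)\le 4L$, i.e.\ $\sum_j w(c_j)\le L$. Since $w(u_j^i)=2L-i\ge L+1$ (as $f_j<L$), no $c_j$ can be a $u$-vertex, so $c_j=v_j^{s_j}$ for some $s_j\in\{0,\ldots,f_j\}$.

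Let $\pi$ denote the order in which $\strategy$ enters the paths. I want the refined bound $\sum_{k<l}p_{\pi(k)}(s_{\pi(k)})\le s_{\pi(l)}$ for every $l$. To get it, I consider the move of $\strategy$ that first places a searcher on $u_{\pi(l)}^{s_{\pi(l)}}$: it is either the entering move clearing $ru_{\pi(l)}^{f_{\pi(l)}}$ (when $s_{\pi(l)}=f_{\pi(l)}$) or the move clearing $v_{\pi(l)}^{s_{\pi(l)}+1}u_{\pi(l)}^{s_{\pi(l)}}$ (when $s_{\pi(l)}<f_{\pi(l)}$), and in either case it occurs before the critical step, so $r$ is still guarded. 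At the end of that move $u_{\pi(l)}^{s_{\pi(l)}}$ is guarded with weight $2L-s_{\pi(l)}$, and the $4L$ budget gives $2L+(2L-s_{\pi(l)})+\sum_{k<l}F_{\pi(k)}\le 4L$, where $F_{\pi(k)}$ is the current frontier weight of $P_{\pi(k)}$; hence $\sum_{k<l}F_{\pi(k)}\le s_{\pi(l)}$. The same $u$-versus-$v$ weight comparison forces each $F_{\pi(k)}$ to sit at some $v$-vertex $v_{\pi(k)}^{i_k}$, and since the connected search is monotone the frontier of $P_{\pi(k)}$ can only move deeper with time, giving $i_k\ge s_{\pi(k)}$ and therefore $F_{\pi(k)}=p_{\pi(k)}(i_k)\ge p_{\pi(k)}(s_{\pi(k)})$ because $p_{\pi(k)}$ is nondecreasing.

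The schedule I construct starts $J_{\pi(l)}$ at $\sigma_l:=\sum_{k<l}p_{\pi(k)}(s_{\pi(k)})$. From $\sigma_l\le s_{\pi(l)}\le f_{\pi(l)}$ and the fact that $p_{\pi(l)}$ is nondecreasing I get $\sigma_l+p_{\pi(l)}(\sigma_l)\le f_{\pi(l)}+p_{\pi(l)}(f_{\pi(l)})\le d_{\pi(l)}$, so deadlines are met; non-overlap is immediate because $\sigma_{l+1}-\sigma_l=p_{\pi(l)}(s_{\pi(l)})\ge p_{\pi(l)}(\sigma_l)$, the actual running time of $J_{\pi(l)}$ started at $\sigma_l$. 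The main obstacle is the middle paragraph: obtaining the tight bound $s_{\pi(l)}$ (rather than the weaker $f_{\pi(l)}$ that the entering move alone would yield) requires catching $\strategy$ exactly at the transient instant when the heavy vertex $u_{\pi(l)}^{s_{\pi(l)}}$ is being guarded.
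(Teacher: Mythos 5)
Your proof is correct, and it takes a genuinely different route from the paper's. The paper fixes the packed start times $s_j=\sum_{i<j}p_i(s_i)$ up front and proves by induction (its Facts~1 and~2) that the search frontier on $P_j$ comes to rest \emph{exactly} at $v_j^{s_j}$ before the next path is entered; this forces a two-sided case analysis (the case $x<s_j$ is a budget violation, while $x>s_j$ is handled by arguing the searchers can be pushed deeper, which is really an implicit normalization of $\strategy$), and it tacitly assumes the paths are processed in contiguous blocks. You instead read the values $s_j$ off the frontier configuration at the single moment $ry_0$ is cleared, and extract only the one-sided inequality $\sum_{k<l}p_{\pi(k)}(s_{\pi(k)})\le s_{\pi(l)}$ from a second budget count taken when the heavy vertex $u_{\pi(l)}^{s_{\pi(l)}}$ is first occupied, combined with monotonicity of the frontier and of $p_j$. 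The price is that your schedule may contain idle time, but feasibility only requires meeting deadlines and non-overlap, so this is harmless. What your version buys is robustness: it needs no exact characterization of where the frontier stops, no modification of the given strategy, and it works verbatim even if $\strategy$ interleaves its work on different paths $P_j$. Both arguments ultimately rest on the same two structural facts, namely Lemma~\ref{lem:ry_0_cleared_last} and the weight gap $w(u_j^i)>L\ge w(v_j^{i'})$ of~(\ref{eq:blocking}).
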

\begin{proof}
Given $\strategy$, define a schedule $D$, where $\pi_D(i)=J_j$ if and only if $ru_j^{f_j}$ is the $i$th cleared edge among the edges in $E_r\setminus\{r,y_0\}$. In other words, the order of clearing the edges in $E_r$ determines the order of task execution in $D$.

Let $\strategy[a_j]$ be clearing of $ru_j^{f_j}$, $j=1,\ldots,|\jobs|$, and let the move $\strategy[a_{|\jobs|+1}]$ clear $ry_0$.
By Lemma~\ref{lem:ry_0_cleared_last}, $ry_0$ is cleared last among the edges in $E_v$.

In order to prove that $D$ is valid we show two facts, namely:
\begin{list}{}{}
\item[Fact 1:] $s_j\leq f_j$ for each $j=1,\ldots,|\jobs|$.
\item[Fact 2:] The move $\strategy[a_{j+1}-1]$ clears the vertex $v_j^{s_j}$, $j=1,\ldots,|\jobs|$.
\end{list}

We use induction on $j=1,\ldots,|\jobs|$ to prove that the above facts hold.

Let $j=1$. Clearly $J_1$ starts at $s_1=0$ in $D$, which implies Fact~1 for $j=1$. We have that $2L$ searchers guard $r$ while clearing a subpath of $P_1$. Since $w(v)\leq 2L$ for each $v\in V(P_1)$, the searchers clear the whole path $P_1$, ending at $v_1^{0}=v_1^{s_1}$. Then, $y_1$ cannot be cleared, because $w(y_1)=3L$, and $w(r)=2L$ searchers occupy $r$ to guard it. So, the next move is $\strategy[a_2]$ which proves Fact~2 for $j=1$.

Assume now that Fact~1 and Fact~2 hold for some $j-1\in\{1,\ldots,|\jobs|-1\}$.

For $D$ we have $s_j=\sum_{i=1,\ldots,j-1}p_i(s_i)$. By the induction hypothesis (Fact~2) we have that the number of searchers used to guard vertices in subtrees rooted at $u_1^{f_1},\ldots,u_{j-1}^{f_{j-1}}$ is $\sum_{i=1,\ldots,j-1}w(v_i^{s_i})$. By~(\ref{eq:weght_def2}), $w(v_i^{s_i})=p_i(s_i)$, which implies that $2L+w(v)+\sum_{i=1,\ldots,j-1}p_i(s_i)=2L+w(v)+s_j$ is the number of searchers used while clearing $v\in V(P_j)$. In particular, the number of searchers used to clear $u_j^{f_j}$ is $2L+2L-f_j+s_j$. Since $\strategy$ uses $4L$ searchers, $s_j\leq f_j$ which proves Fact~1.

In the move $\strategy[a_j]$ we clear $ru_j^{f_j}$ and then the searchers clear partially the subtree rooted at $u_j^{f_j}$, ending by clearing a vertex $v_j^{x}$, $0\leq x\leq f_j$ and then the move $\strategy[a_{j+1}]$ follows. ($y_j$ cannot be cleared when $r$ is guarded, because $w(y_j)=3L$. Moreover, the search does not stop at a vertex $u_j^i$, because by~(\ref{eq:blocking}) it is possible to continue by clearing $v_j^i$ for each $i=0,\ldots,f_j$.)

If $x<s_j$ then, in particular, the vertex $u_j^{s_j-1}$ has been cleared, while $2L+s_j$ searchers are used to guard $r$ and $v_i^{s_i}$, $i=1,\ldots,j-1$. By~(\ref{eq:weght_def2}), $w(u_j^{s_j-1})=2L-s_j+1$. So, the total number of searchers used while clearing $u_j^{s_j-1}$ is $2L+s_j+2L-s_j+1>4L$ --- a contradiction.

If $x>s_j$, then we can clear $v_j^xu_j^{x-1}$, because as before $2L+s_j$ searchers are used for guarding and $w(u_j^{x-1})=2L-(x-1)$ additional searchers clear $u_j^{x-1}$, which means that the number of searchers in use is $4L+s_j-x+1\leq 4L$. Then, by~(\ref{eq:blocking}), we can clear $v_j^{x-1}$.

By Fact~1, $s_j\leq f_j$, for each task $J_j\in\jobs$, which means that $C_j\leq f_j+p_j(s_j)\leq d_j$. This proves that $D$ is valid.
\end{proof}

The $\csfp$ is clearly in NP, and the reduction is polynomial in $n$, which gives us the theorem.
\begin{theorem} \label{thm:csf_hard}
Given a weighted tree $T$ rooted at $r$ and an integer $k\geq 0$, deciding whether $\csn(T_r)\leq k$ is \textup{NP}-complete.
\qed
\end{theorem}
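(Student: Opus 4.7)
The plan is to observe that all the substantive work has already been done, so the proof amounts to assembling the pieces into a standard NP-completeness argument. First I would verify membership in NP: a nondeterministic algorithm can guess a sequence of at most $|E(T)|$ moves (each specifying the edge cleared and the set of guarded vertices at the end of the move) and verify in polynomial time that it is a valid connected $k$-search strategy, in particular that the total number of searchers in use at each step is at most $k$.

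For NP-hardness I would reduce from $\tds$, which is strongly NP-complete by Theorem~\ref{thm:tds_hard}; the reduction is the one described in Subsection~\ref{subsec:reduction}, producing the tree $T$ rooted at $r$ with weights given by~(\ref{eq:weght_def1})--(\ref{eq:weght_def2}) together with $k=4L$. The key point to check is that this reduction is polynomial-time: the number of vertices of $T$ is $O(\sum_j f_j + |\jobs|) = O(|\jobs|\cdot L)$, and since Theorem~\ref{thm:tds_hard} gives NP-completeness already for instances where the deadlines (hence $L$) are polynomially bounded in $|\jobs|$, the instance $(T,r,k)$ has size polynomial in the input size of the $\tds$ instance. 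The weights are also polynomially bounded, so the encoding is polynomial.

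Correctness of the reduction is exactly the content of Corollary~\ref{cor:scheduling_to_search} and Lemma~\ref{lem:search_to_scheduling}: a valid schedule for $\jobs$ yields a connected $4L$-search for $T_r$, and conversely any connected $4L$-search for $T_r$ yields a valid schedule for $\jobs$. Hence $\jobs$ admits a feasible schedule if and only if $\csn(T_r)\leq k$, which combined with membership in NP gives the theorem.

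I do not expect any real obstacle, since the only thing not already explicit in the preceding lemmas is the observation that the polynomial boundedness guaranteed by Theorem~\ref{thm:tds_hard} is precisely what makes the tree $T$ have polynomial size; this is what justifies the phrase ``strongly NP-hard'' announced in the introduction for the $\csfp$ (and hence, via~(\ref{eq:fixed_start}), for the $\csp$) problem.
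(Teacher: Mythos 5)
Your proposal is correct and follows exactly the paper's own route: NP membership by guessing the clearing order, and NP-hardness via the reduction from $\tds$ built in Subsection~\ref{subsec:reduction}, with correctness given by Corollary~\ref{cor:scheduling_to_search} and Lemma~\ref{lem:search_to_scheduling}. Your explicit remark that the strong NP-completeness of $\tds$ (polynomially bounded $L$) is what makes the tree polynomial-size merely spells out what the paper asserts in one line, so there is no substantive difference.
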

Let $T_r=(V(T),E(T),w)$ and $k$ be an input to the $\csfp$ problem. There exists a connected $k$-search strategy for $T_r$ if and only if there exists a connected $(2k)$-search strategy for $T_r^2=(V(T),E(T),2w)$ (we double the weights of the vertices in $T_r$). Take three copies of $T_r^2$, add a vertex $r'$, which will be the root of $T_r'$, and let the roots of the trees $T_r^2$ be the sons of $r'$. We have that $\csn(T_r')=2k+1$. Moreover, if $\strategy'$ is a connected $(2k+1)$-search strategy for $T_r'$ then regardless of the starting vertex of $\strategy'$, the strategy is forced to clear one of the subtrees $T_r^2$ in $T_r'$ by starting at $r$ and using $2k$ searchers. This leads to the following
\begin{corollary} \label{cor:cs_hard}
The problem of connected searching of weighted trees is strongly \textup{NP}-hard.
\qed
\end{corollary}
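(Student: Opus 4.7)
The plan is to formalize the reduction sketched in the paragraph immediately preceding the corollary. We begin from an arbitrary $\csfp$ instance $(T_r, k)$, whose hardness is asserted by Theorem~\ref{thm:csf_hard}; since the reduction in Subsection~\ref{subsec:reduction} produces only weights bounded by $O(L)$, polynomial in the input size by Theorem~\ref{thm:tds_hard}, this hardness is in fact strong. The reduction to $\csp$ proceeds in two stages: first double every vertex weight of $T_r$ to obtain $T_r^2$, then attach three disjoint copies of $T_r^2$ under a fresh root $r'$ of weight $1$ (with unit-weight connecting edges, per Lemma~\ref{lem:edges_unweighted}) to form $T_r'$. The target instance is $(T_r', 2k+1)$, whose weights remain polynomially bounded. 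The core task is to verify the equivalence $\csn(T_r') \le 2k+1 \iff \csn(T_r) \le k$, with $r$ understood as the fixed start of the $\csfp$ side.

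The doubling step is essentially a rescaling: every weight of $T_r^2$ is even, so any connected $2k$-search of $T_r^2$ starting at $r$ halves move by move to a connected $k$-search of $T_r$ starting at $r$, and conversely. For the forward direction of the main equivalence, begin at $r'$ with $2k+1$ searchers and process the three copies one at a time. Clearing $r'v_i$ slides $w(v_i) = 2w(r)$ searchers into $v_i$; while the $i$-th copy is being searched, $r'$ still has a contaminated incident edge (some $r'v_j$ with $j \ne i$ that has not yet been cleared), so exactly one searcher is consumed guarding $r'$ and the remaining $2k$ execute the scaled $2k$-search of the current copy starting at its root. Bookkeeping confirms that the total in use never exceeds $2k+1$, and once the third copy is cleared no contaminated edges remain.

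The main obstacle is the reverse direction: that any connected $(2k+1)$-search $\strategy'$ of $T_r'$, starting at an arbitrary $v_0 \in V(T_r')$, induces a connected $2k$-search of some copy of $T_r^2$ starting at that copy's root. The plan is to pick $C^*$ to be the first copy not containing $v_0$ to become fully cleared under $\strategy'$; such a copy exists because at most one of the three copies contains $v_0$. Since $v_0 \notin V(C^*)$, connectedness and monotonicity force $\strategy'$ to enter $C^*$ through its unique cut vertex $v^*$ via $r'v^*$, so the subsequence of moves of $\strategy'$ clearing edges of $C^*$ forms, in the induced order, a connected partial search of $C^*$ starting at $v^*$. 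Throughout this subsequence, at least one other copy still contains a contaminated edge (by the choice of $C^*$), which forces at least one searcher to reside outside $C^*$: if the corresponding $r'v_j$ is contaminated then $r'$ consumes its single guard, and otherwise $r'v_j$ is already clear but some interior edge of $C_j$ is contaminated, in which case $w(v_j) = 2w(r) \ge 2$ searchers guard $v_j$. Consequently the restricted search of $C^*$ uses at most $2k$ searchers at every step, and halving via the doubling equivalence yields a $k$-search of $T_r$ starting at $r$. Combining both directions with the polynomial bound on the weights of $T_r'$ proves that $\csp$ is strongly \textup{NP}-hard, as claimed.
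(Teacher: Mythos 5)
Your proposal is correct and follows essentially the same route as the paper, which itself only sketches this reduction in the paragraph preceding the corollary (double the weights, attach three copies of $T_r^2$ to a new root $r'$, and argue that any $(2k+1)$-search is forced to clear some copy from its root with only $2k$ searchers). Your write-up actually supplies details the paper leaves implicit --- in particular the choice of $C^*$ as the first fully cleared copy not containing the starting vertex, which is exactly why three copies (rather than two) are needed --- so there is nothing to correct.
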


\section{Conclusions}
\label{sec:conclusions}

This paper presents a polynomial-time algorithm for finding optimal connected search strategies of a bounded degree trees with any weights on the edges and vertices of the tree. On the other hand, the corresponding decision problem is NP-complete for arbitrary trees with restricted weight functions $w$, where $w(e)=1$ for each edge $e$ and $w(v)$ is bounded by a polynomial in $n$, where $n$ is the number of vertices of the input tree.

One of the interesting open problems is the existence of `good' approximations for finding connected search strategies for trees. Note that the bound $\csn(T)\leq 2\sn(T)$ \cite{connected_weighted_trees} does not yield an approximation algorithm since no algorithms for searching weighted trees are known.

\bibliographystyle{plain}
\bibliography{search}
\end{document}